\newtheorem{theorem}{Theorem}
\newtheorem{lemma}{Lemma}
\newtheorem{assumption}{Assumption}
\newtheorem{proof}{Proof}
\begin{document}

\begin{frontmatter}



\title{Multi-Target Federated Backdoor Attack Based on Feature Aggregation}



\author[1,2]{Lingguag Hao}
\ead{haolingguang@mail.dhu.edu.cn}

\author[1,2]{Kuangrong Hao\corref{cor}}
\cortext[cor]{Corresponding author}
\ead{krhao@dhu.edu.cn}

\author[1,2]{Bing Wei}
\ead{bingwei@dhu.edu.cn}

\author[1,2]{Xue-song Tang}
\ead{tangxs@dhu.edu.cn}


\address[1]{College of Information Science and Technology, Donghua University, Shanghai, 201620, China}
\address[2]{Engineering Research Center of Digitized Textile $\&$ Apparel Technology, Ministry of Education, Shanghai, 201620, China}

\begin{abstract}
Current federated backdoor attacks focus on collaboratively training backdoor triggers, where multiple compromised clients train their local trigger patches and then merge them into a global trigger during the inference phase. However, these methods require careful design of the shape and position of trigger patches and lack the feature interactions between trigger patches during training, resulting in poor backdoor attack success rates. Moreover, the pixels of the patches remain untruncated, thereby making abrupt areas in backdoor examples easily detectable by the detection algorithm. To this end, we propose a novel benchmark for the federated backdoor attack based on feature aggregation. Specifically, we align the dimensions of triggers with images, delimit the trigger's pixel boundaries, and facilitate feature interaction among local triggers trained by each compromised client. Furthermore, leveraging the intra-class attack strategy, we propose the simultaneous generation of backdoor triggers for all target classes, significantly reducing the overall production time for triggers across all target classes and increasing the risk of the federated model being attacked. Experiments demonstrate that our method can not only bypass the detection of defense methods while patch-based methods fail, but also achieve a zero-shot backdoor attack with a success rate of 77.39\%. To the best of our knowledge, our work is the first to implement such a zero-shot attack in federated learning. Finally, we evaluate attack performance by varying the trigger's training factors, including poison location, ratio, pixel bound, and trigger training duration (local epochs and communication rounds). 
\end{abstract}



\begin{keyword}


Federated learning \sep feature aggregation \sep zero-shot backdoor attack \sep multi-target backdoor trigger.

\end{keyword}

\end{frontmatter}



\section{Introduction}
Federated learning represents a decentralized, distributed machine learning framework wherein multiple client participants and a coordinating server collaboratively train a global machine learning model \cite{R1,R62}. To preserve data privacy, each client utilizes its private data to train local models, subsequently uploading them to the server for global model aggregation. Then, the global model is then sent to clients for the next training iteration. Applications of federated learning encompass diverse fields such as smart grids \cite{R35}, edge computing \cite{R34}, and smart healthcare \cite{R3, R64}. However, given the undetectable nature of each client’s training process, especially in federated systems employing secure aggregation algorithms \cite{R36}, adversaries may attempt to compromise one or more clients to implement poisoning attacks during the local training phase \cite{R4}. Furthermore, poisoning attacks can manipulate the model’s predictions in favor of adversaries during testing, yielding serious consequences.

Poisoning attacks in federated learning fall into two main categories: model poisoning \cite{R5} and data poisoning \cite{R6}. Data poisoning tampers with the client’s training data to achieve a specific goal. In contrast, model poisoning allows an adversary to manipulate the parameters of a local model before the client uploads it. Poisoning attacks can also be categorized into targeted and non-targeted attacks based on the adversary’s goal. Targeted attacks \cite{R7}, often referred to as backdoor attacks, entail implanting backdoor triggers into the trained global model by manipulating the client’s training data or model parameters during the training phase. During the inference phase, the backdoor model misclassifies samples with similar backdoor triggers into the target class desired by the adversary. In contrast, non-targeted attacks \cite{R4} merely focus on destroying the performance of the global model on the primary machine learning task. For instance, in image classification, a backdoor attack aims to make the model misclassify the car with a “Pikachu” pattern as an airplane while ensuring other examples are correctly classified.

Backdoor attacks, compared to non-target attacks, are more stealthy and challenging to detect as they maintain the global model’s performance on the test set while revealing the backdoor function only on samples embedded with backdoor triggers \cite{R37, R63}. Existing backdoor attacks against federated systems are primarily classified into three categories. First, model injection backdoor attacks \cite{R8} directly modify the local model parameters before updating to the server. Second, naive label flipping attacks \cite{R6} based on training data distribution, such as altering the label of a “green car” to that of a “frog”. Third, trigger-based backdoor attacks \cite{R7} embed external triggers into the training data of the compromised clients and alter the data labels to a specified target class before local training. However, model injection attacks require more extensive permissions for the attacker to manipulate the local model, while naive label flipping attacks rely heavily on global distribution knowledge of the compromised dataset, making trigger-based methods an ideal choice. Recently, trigger-based backdoor attacks in federated learning focus on leveraging the distributed properties of the federation to learn a global trigger with rich semantics.

Xie et al. \cite{R7} pioneered a distributed backdoor attack (DBA) against federated systems, as shown in Figure \ref{Fig1_1}. During the training phase, several handcrafted backdoor trigger patches with the same label are respectively embedded into the training data of several compromised clients to jointly train the backdoor model. The remaining clients participating in the federated training are benign. In the inference phase, the adversary combines these trigger patches into a whole backdoor trigger and then embeds it into the test sample to activate the backdoor of the global model. Compared with the previous single-trigger method, wherein all compromised clients use the same large trigger for backdoor model training, this distributed collaborative approach significantly improves the success rate of backdoor attacks \cite{R7}. In a recent development, Gong et al. \cite{R9} introduced a collaborative training strategy for model-dependent backdoor triggers. Different from DBA, this method first employs the global models near convergence to train trigger patches distributed on every compromised client. These trained trigger patches are then implanted into the respective compromised client’s training data to build the backdoor model in subsequent federated training rounds. In the inference phase, the global trigger remains composed of all trigger patches and is implanted into test samples, executing backdoor attacks on the federated model. We term the above method as a patch-based backdoor attack, which has the following drawbacks:
\begin{itemize}
  \item First, artificially specifying the shape, size, and position of the trigger patch, coupled with leaving the pixels unbounded, results in easily detectable abrupt areas in backdoor examples.
  \item Second, the training of backdoor trigger patches is entirely independent and therefore fails to capture the comprehensive data distribution of compromised clients. This leads to a poor backdoor attack success rate on other regular clients.
  \item Third, only one backdoor exists in a federated model. In other words, only one type of backdoor trigger can be injected into the training data of all compromised clients.
\end{itemize}

\begin{figure*}[!t]
\centering
\subfloat[Patch-based attacks]{\includegraphics[width=0.49\textwidth]{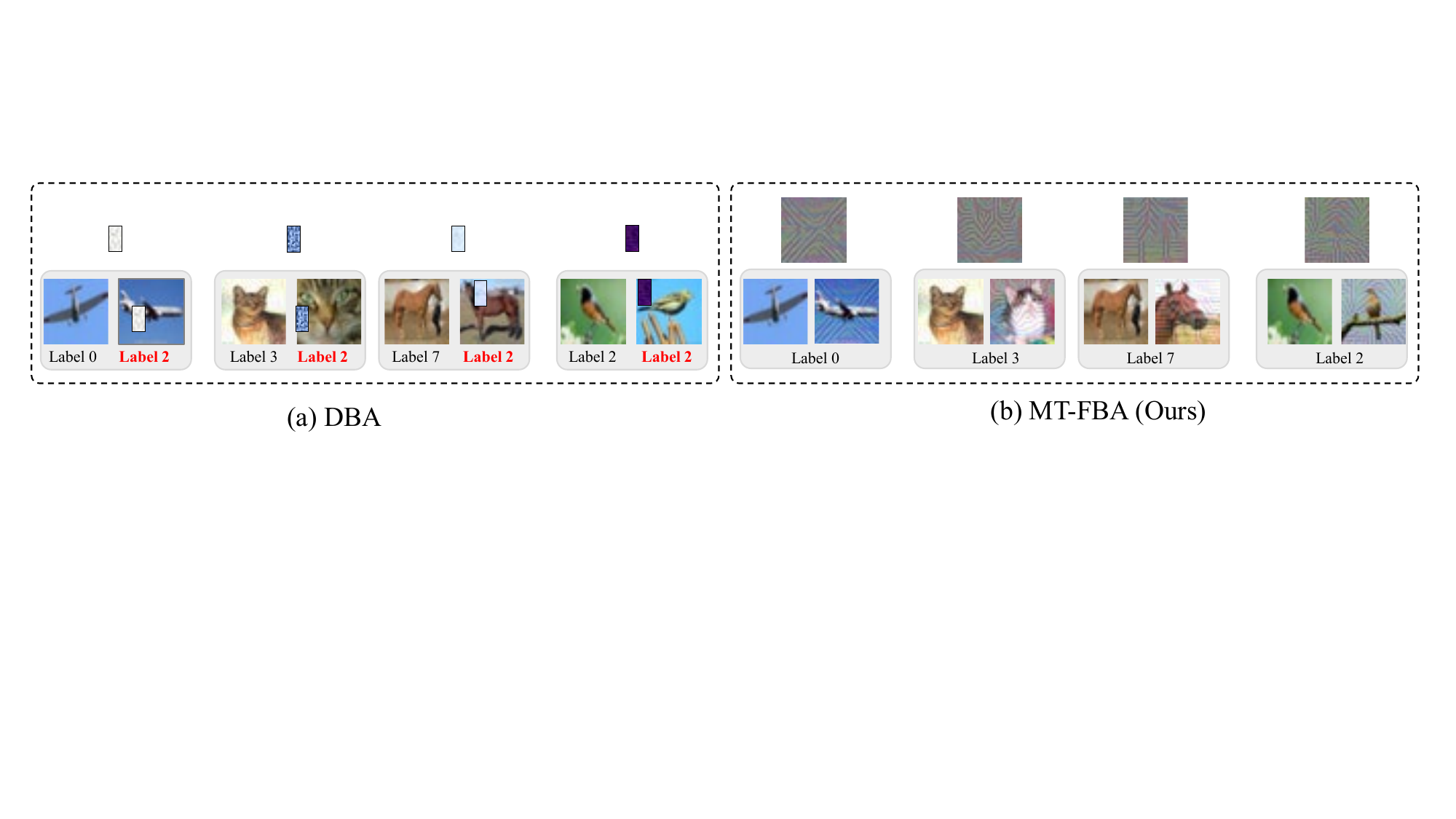}
\label{Fig1_1}}
\hfil
\subfloat[MT-FBA (Ours)]{\includegraphics[width=0.49\textwidth]{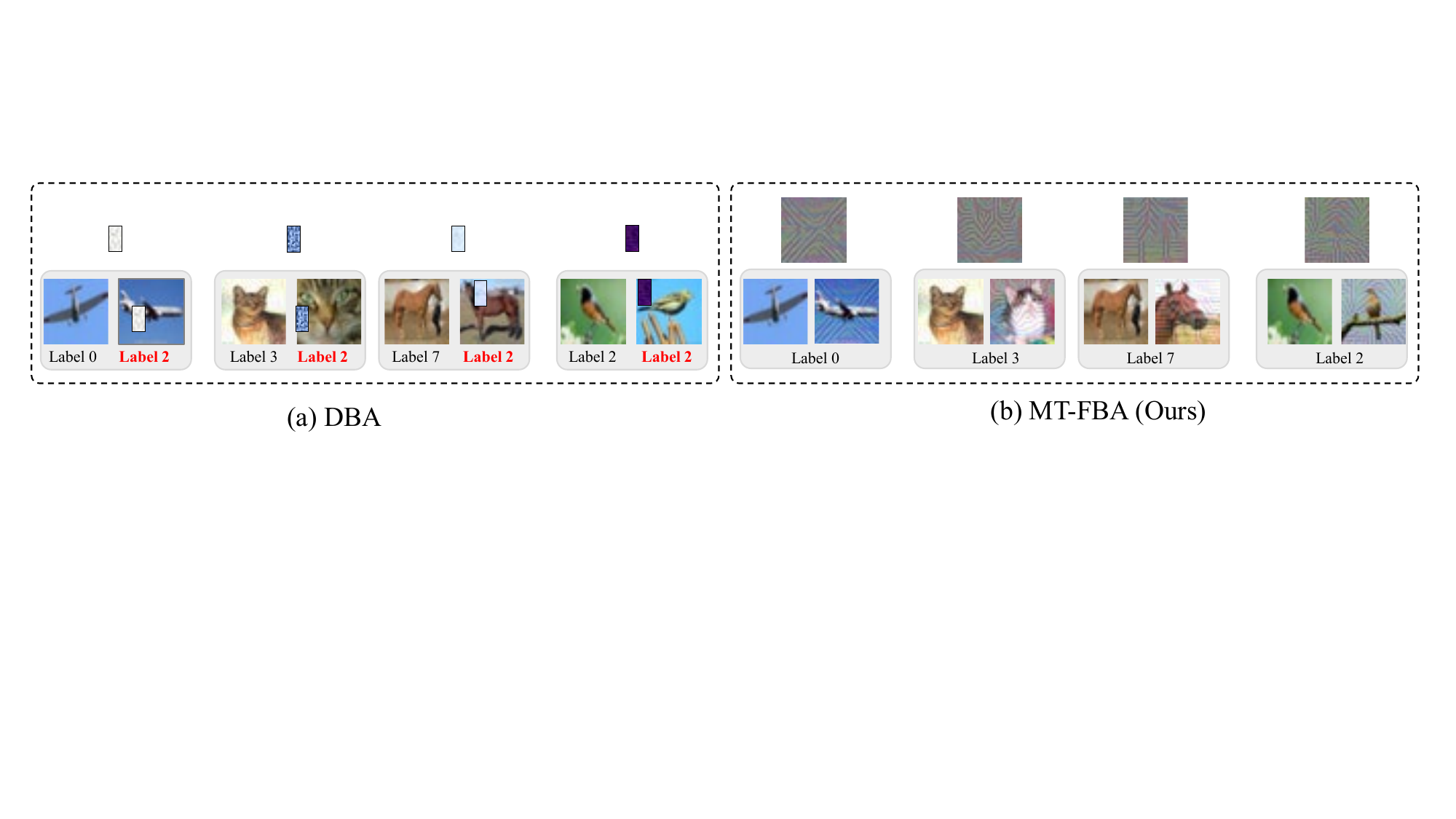}
\label{Fig1_2}}
\caption{Comparison of backdoor training examples between (a) the patch-based attack and (b) our proposed feature-based attack. The top row shows some of the backdoor triggers, while the bottom row displays resulting backdoor samples from compromised clients on the CIFAR10 dataset.}
\label{Fig1}
\end{figure*}

To this end, we propose a pioneering \textbf{M}ulti-\textbf{T}arget \textbf{F}ederated \textbf{B}ackdoor \textbf{A}ttack (MT-FBA) based on feature aggregation. As shown in Figure \ref{Fig1_2}, our method creates backdoor triggers that are consistent in size with input images. An epsilon ball is employed to constrain the trigger pixels, ensuring that the distance between the backdoor sample and the original sample remains within a minimal range, thereby facilitating evasion from federated inspection.

Additionally, we train backdoor triggers using intra-class attacks \cite{R10}. As shown in Figure \ref{Fig1_2}, intra-class attacks train a backdoor trigger for each class of samples. These backdoor triggers learn the characteristics of their respective classes, causing the federated model to misclassify other class samples embedded with specific backdoor triggers into the target class during the test phase. In each round, we aggregate the features of local backdoor triggers from all compromised clients, enabling the global trigger to learn the overall distribution of all client training data.

Finally, unlike previous patch-based attack methods that can implant only one backdoor into the federated model at a time, our method generates backdoor triggers for all target classes simultaneously. This significantly increases the risk of federated models being compromised. Extensive experiments conducted on commonly used federated learning datasets illustrate the superior attack success rate and enhanced stealth of our method in executing backdoor attacks.

Overall, our contributions are as follows:
\begin{enumerate}
  \item We introduce a pioneering backdoor trigger training method MT-FBA against the federated system. It enables the global trigger to capture the comprehensive data distribution of compromised clients by leveraging feature aggregation, thereby achieving a better backdoor attack success rate. In addition, our method achieves the simultaneous generation of backdoor triggers for all target classes through the intra-class attack strategy. This significantly increases the risk of the federated model being attacked (Section \ref{Sec3}).
  \item We provide the evaluation of robustness under the state-of-the-art federated defense, MT-FBA demonstrates superior effectiveness and stealth compared with patch-based attacks (Section \ref{Sec4.3}). Extensive experiments analyze the factors affecting the success rate of federated backdoor attacks (Section \ref{Sec4.2}).
  \item We implement feature visualizations to explain the effectiveness of MT-FBA (Section \ref{Sec5.2}) and demonstrate the feasibility of zero-shot backdoor attacks by transferring backdoor triggers (Section \ref{Sec5.1}).
\end{enumerate}

The remainder of this work is organized as follows. In Section \ref{Sec2}, we review background and related work on federated learning security, federated backdoor attacks, and defenses against such attacks. In Section \ref{Sec3}, the proposed MT-FBA is presented. Section \ref{Sec4} evaluates the performance of our proposed method on multiple datasets and defense mechanisms. Section \ref{Sec5} discusses the efficacy of the triggers generated by our method and introduces the concept of a zero-shot federated backdoor attack. Finally, we summarize the conclusion and future works in Section \ref{Sec6}.

\section{Background and Related Work}\label{Sec2}

\subsection{Federated Backdoor Attacks}\label{Sec2.2}

In federated learning, a backdoor attack involves an adversary embedding malicious triggers into the training data or model updates of local clients. During the inference phase, the global model behaves anomalously when encountering backdoor samples, while performing normally otherwise. Federated backdoor attacks are generally divided into three categories: label flipping, backdoor triggers, and model injection.

\textbf{Label flipping attacks}. Naive label flipping \cite{R4} involves simply changing the label of a specific type of data to the target class. During the inference phase, these modified data can trigger a backdoor attack without affecting the model’s performance in other classes. Additionally, Wang et al. \cite{R6} introduce the concept of edge-case backdoors which are located at the tail of the input distribution. These examples, appearing with low probability in training and testing, lead to the model having low confidence in its correct prediction. Label flipping of such samples can easily create backdoor instances, such as flipping the label from “green car” to “frog” in the CIFAR10 dataset. However, Label flipping attacks heavily rely on the compromised client's data distribution and can degrade the model's accuracy on target samples.

\textbf{Backdoor trigger attacks}. This attack first embeds the backdoor trigger into the training samples and then flips the labels. It allows for artificially specifying patterns as triggers or dynamically training the content of triggers. While extensive research on trigger-based attacks has been conducted in centralized learning \cite{R41}, relevant research in federated learning emphasizes the impact of distributed training strategies on backdoor trigger design. For example, Xie et al. \cite{R7} manually designed diverse small-scale backdoor triggers with the same label during the federated training and distributed them to different compromised clients for backdoor model training. During the inference phase, all small local triggers are combined into a global trigger, demonstrating a higher success rate for backdoor attacks. Dynamic training triggers, also known as model-dependent triggers, are proposed by Gong et al. \cite{R9}. In this method, each compromised client first pretends to be a clean client to participate in federated training, but when the global model approaches convergence, they use the global model to train their local trigger patches. In the subsequent federated round, the trained local trigger patches are implanted into the training data of compromised clients. During the inference stage, all local trigger patches are merged into a large global trigger and implanted into test samples. We term the aforementioned methods as patch-based federated backdoor attacks. In contrast, our method manufactures triggers that are consistent in size with the input image without taking into account trigger size, shape, or location. Our method obtains the global triggers by feature fusion of local triggers from all compromised clients and limits the trigger pixels to a small value to evade detection by defense mechanisms. Moreover, our method is based on intra-class attacks without flipping labels, which minimally impacts the model's accuracy on the target class samples.

\textbf{Model injection attacks}. Backdoor attacks based on model poisoning replace \cite{R8} or partially modify \cite{R38} the model parameters after local model training, followed by uploading these manipulated models to the server for aggregation. Since model replacement directly manipulates model parameters, the attack success rate is higher compared to data poisoning. However, the drawback lies in its reduced stealth. Poisoned models uploaded to the server are more easily detected by defenders, prompting attackers to combine model poisoning methods with boundary constraint-based approaches \cite{R11} to enhance concealment. Additionally, model injection attacks cannot be implemented alone \cite{R8}. They need to be carried out in conjunction with label flipping, which limits their application due to the high requirements.

\subsection{Federated Backdoor Defense}\label{Sec2.3}

It is assumed that the client's training is private, that is, the server/censor has no access rights to the client's training. In addition, both data poisoning and model poisoning will change the parameter distribution of the local model. Therefore, federated backdoor defense can be divided into three stages, targeting the front, middle, and back of federated aggregation \cite{R16}.

\textbf{Pre-aggregation defense}. Clustering methods are primarily utilized to differentiate poison models from benign models. These methods rely on distribution assumptions about client data. For instance, approaches like Auror \cite{R17} and Krum \cite{R18} assume that data from benign clients are independent and identically distributed (IID). In contrast, FoolsGold \cite{R19} and AFA \cite{R20} assume that benign data are non-IID. In addition to auditing the communication between the client and the server, AP2FL \cite{R52} also uses client model similarity detection to improve the security of federated aggregation. However, these methods also assume that malicious clients will behave similarly in each round. Given that client data is private and its distribution is unknown, these assumptions often lead to the failure of clustering-based defenses.

\textbf{In-aggregation defense}. Different approaches are considered, including those based on differential privacy (DP) \cite{R23}, model smoothness \cite{R25}, and robust aggregation rules \cite{R26}. The effectiveness of DP comes at the expense of the global model performance on the main learning task, especially when the client data is non-IID. Techniques such as model smoothness and parameter clipping are employed as additional modules to constrain the parameters of locally uploaded models, enhancing defense capabilities. However, determining the optimal clipping threshold often poses challenges, as it may impact the global model's performance on the main learning task. Robust aggregation rules represent an improvement over traditional FedAvg methods \cite{R1}, incorporating strategies like using the median of the local model as the global model's parameter \cite{R27} and employing a robust learning rate strategy \cite{R26}.

\textbf{Post-aggregation defense}. Strategies such as model structure deletion \cite{R28} and federated unlearning \cite{R42} are applied to eliminate backdoors in the global model. Techniques include removing low-activation neurons and knowledge distillation. However, these methods require more computation and additional public datasets.

\section{Methodology}\label{Sec3}

In this section, we first introduce the representation of federated systems and backdoor attacks (Section \ref{Sec3.1}), and then clarify our motivation for proposing MT-FBA (Section \ref{Sec3.2}). Finally, we will elaborate step by step on the proposed method (Section \ref{Sec3.3}).

\subsection{Preliminary}\label{Sec3.1}

\subsubsection{Federated Learning System}\label{Sec3.1.1} During a federated training process containing $K$ eligible clients for participation, the server randomly selects $cK$ (where $0<c<1$) clients to participate in model training at round $t$. Let $S_t$ $(|S_t|=cK)$ be the set comprising the selected clients, and $n_k$ as the number of training samples for the client $k$. In round $t$,  the local model parameters of client $k$ are represented as $w_t^k$, while the server aggregates selected local models as the global model $w_t$. We focus on horizontal federated learning and the prominent \emph{Federated Average} (FedAvg) \cite{R1} is expressed as follows:
\begin{equation}\label{Eq1}
w_{t}=\sum_{k\in S_t}{\frac{n_k}{n}w_t^k}\ ,
\end{equation}
where $n=\sum_{k\in S_t}n_k$ is the total number of samples of local clients participating in model training.
The training process of the local model is
\begin{equation}\label{Eq2}
w_t^k=w_{t-1}- \eta \nabla \mathcal{L}(w_{t-1})\ ,
\end{equation}
where $\eta$ is the learning rate of local model training and $\mathcal{L}$ is the objective function used to train the machine learning model at the client $k$. Different clients are allowed different training strategies, such as different objective functions and learning rate strategies. However, to effectively express the backdoor attack, a unified symbolic representation of the training strategies is denoted above for any client.

\subsubsection{Attacker Knowledge}\label{Sec3.1.2} The adversary entails two key capabilities of the compromised client: 1) The ability to manipulate the local training data; 2) The authority over the local training procedures across all communication rounds. Moreover, following previous work \cite{R7, R9}, it is reasonable for the compromised clients to engage in communication among themselves, i.e., they conspire to complete a federated backdoor attack. It is essential to emphasize that during local training, only samples from compromised clients will be embedded with backdoor triggers, while samples from other clients will remain unmodified. During round $t$, the server employs random sampling to select clients for participation in training. Assuming there are $P$ compromised clients randomly distributed among all clients, the number of compromised clients that the server may select ranges from 0 to $\min(P,cK)$, following a hypergeometric distribution \cite{R11}.

\subsubsection{Federated Backdoor Attack}\label{Sec3.1.3} As a case study, we focus on a federated backdoor attack based on data poisoning on the image classification task. The purpose of the backdoor attack is to tamper with the local dataset of some clients so that the global model fits both the main task and the backdoor task. In the $t$-th round, we assume that the server selects the client set to participate in training as $S_t$, where the compromised and clean client sets are represented as $S_t^{com}$ and $S_t^{cln}$ respectively. During training, the adversary aims to optimize
\begin{equation}\label{Eq3}
 w_{t}^*=\underset{w_t}{\text{arg max}}( \sum_{i\in S_{t}^{com}} \mathbb{P} \left[\mathcal{F}\left(\mathcal{R}\left(x_i \right);w_t \right) = \tau \right] + \sum_{i\in S_{t}^{cln}}{\mathbb{P} \left[ \mathcal{F}\left( x_i; w_t \right)=y_i\right]}),
\end{equation}
where $\mathcal{F}$ denotes the classification model characterized by parameters $w_t$, while $\mathcal{R}$ represents the method used for generating backdoor samples. Additionally, $\tau$ signifies the label assigned by the attacker to the backdoor sample, while $y_i$ is the ground truth label corresponding to the sample $x_i$.

\subsection{Motivation}\label{Sec3.2}

\begin{figure}[!t]
\centering
\includegraphics[width=0.4\linewidth]{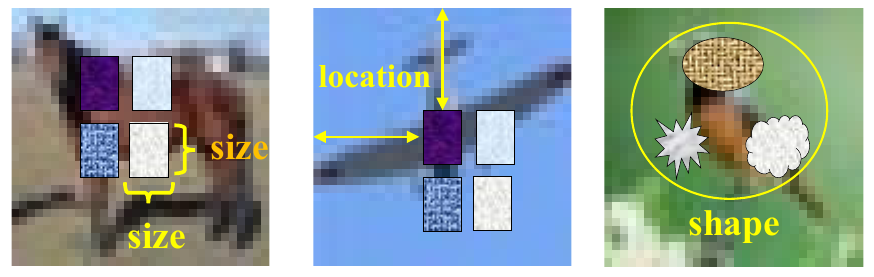}
\caption{Illustration of backdoor test samples generated using patch-based backdoor attacks on the CIFAR10 dataset.}
\label{Fig_DBA}
\end{figure}

In the earlier patch-based attack \cite{R7}, the $\mathcal{R}$ operation incorporates a manually crafted image patch $\delta$ as a backdoor trigger into an image to fabricate a backdoor sample. However, this handcrafted patch functioned independently of the machine learning model, resulting in a limited backdoor attack success rate. Subsequently, Gong et al. \cite{R9} introduced a model-dependent backdoor attack methodology, where the trigger patch needs to be trained with the global model. In this approach, each compromised client generates a local trigger patch at the round when the global model approaches convergence. Subsequently, this local trigger is utilized to poison the local train data. The generating process of the backdoor trigger patch is aimed at optimizing:
\begin{equation}\label{Eq4}
 \delta^*=\underset{\delta\in\Delta}{\text{arg max}} \sum_{i\in S_{t}^{com}}\mathbb{P} \left[\mathcal{F}\left(x_i+\delta;w_t \right) = \tau \right],
\end{equation}
where $\Delta$ represents the range of the preset patch, including position, shape, and size, as shown in Figure \ref{Fig_DBA}. In the subsequent federation training round, if the local model of the compromised client is selected to participate in aggregation, the backdoor will be injected into the global model. After the federated training is completed, the attacker uses a global trigger spliced by all local trigger patches to poison the test sample.

However, patch-based federated backdoor attacks require manual confirmation of the location, size, and shape of the trigger patch, and the patch pixels are not restricted resulting in contaminated areas that are easily detectable by defense mechanisms (Figure \ref{Fig_DBA}). In addition, the label imposed on the backdoor trigger patches lacks correlation with the image content, and the training of local trigger patches from all compromised clients is independent of each other and does not capture the overall distribution of data from compromised clients. Finally, it is important to note that only one type of backdoor can coexist in a federated system at a time (i.e. all local backdoor trigger patches are labeled the same, as shown in Figure \ref{Fig1_1}), which may limit the diversity of backdoor attacks.

To this end, we propose a novel Multi-Target Federated Backdoor Attack (MT-FBA). As shown in Figure \ref{Fig1_2}, we first design triggers that are consistent in size with the input image and truncate the pixels of the trigger to an $\epsilon$ value, which is imperceptible to the human eye. This approach can avoid the impact of manually designed trigger size, position, and shape on the attack result in patch-based attacks. Additionally, truncating the pixels of the backdoor trigger helps the backdoor samples evade detection by defense methods during the testing phase.

\begin{figure}[!t]
\centering
\includegraphics[width=0.35\linewidth]{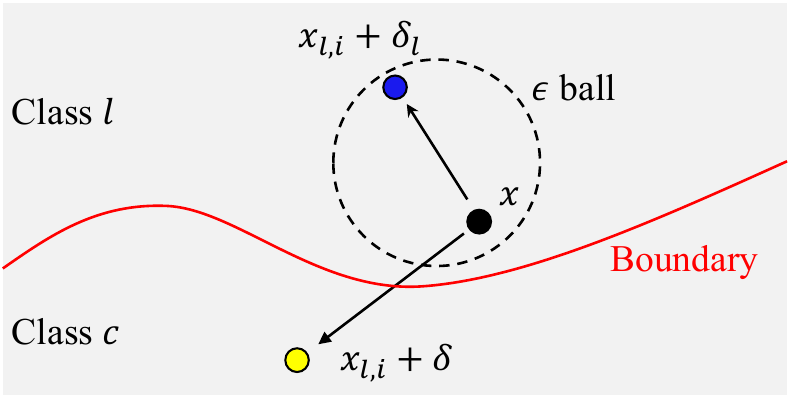}
\caption{Illustration of the classification boundary of backdoor training samples generated by the patch-based attack (yellow point) and the intra-class attack (blue point).}
\label{Fig_difference}
\end{figure}

Moreover, we employ an intra-class attack strategy \cite{R10} to address the challenge that the backdoor trigger label is independent of image content. We aim to optimize the following problems:
\begin{equation}\label{Eq5}
\delta_l^*=\underset{\delta\in\Delta}{\text{arg min}}{\sum_{(x_{l,i},y_{l,i}) \in \mathcal{D}_l}}{\mathcal{L}(\mathcal{F}\left(x_{l,i}+\delta_l \right),y_{l,i})},
\end{equation}
where $(x_{l,i}$, $y_{l,i})$ are samples and labels of class $l\in[1,...,L]$ data in the compromised client $S_{t}^{com}$, and $L$ is total classes. The adversary aims to optimize the target backdoor trigger $\delta_l$, which can minimize the prediction loss of the model $\mathcal{F}$ when embedded into a target class sample $x_{l,i}$. $\Delta$ represents the allowed boundary range $\epsilon$ of the trigger and $\mathcal{L}$ is the loss function, usually a cross-entropy loss in classification tasks. As shown in Figure \ref{Fig_difference}, the backdoor sample $x_{l,i} + \delta_l $ generated by our method retains the label $ l $ of the original clean sample $x_{l,i}$. By minimizing the prediction loss, the backdoor trigger $ \delta_l $ is trained to incorporate the characteristics of the class $l$. In contrast, the label of the backdoor sample generated using patch-based methods is a new label $ c $. Consequently, the backdoor trigger $\delta$ in patch-based attacks is unrelated to the image content.

Based on the intra-class attack strategy, we can create backdoor triggers for all target classes simultaneously. Compared to patch-based attacks, which can only create one type of backdoor at a time, our method significantly increases the diversity of backdoors and allows one to target federated models across multiple categories. In the backdoor attacks of the federated system, it’s assumed that the adversary can compromise a subset of the clients. Drawing inspiration from the federated aggregation algorithm, we perform feature fusion among local backdoor triggers distributed across various compromised clients to generate a global backdoor trigger. After multiple rounds of training, the global backdoor trigger will learn the overall data distribution of all compromised clients. Therefore, the optimal $\delta_l$ is considered to contain features pointing towards the inside of the target class $l$. During the inference phase, even when embedded within samples from other classes, the backdoor trigger $\delta_l$ should induce the model to predict those samples as the target class $l$.

As shown in Figure \ref{Fig2}, our method consists of four steps.
\begin{enumerate}[Step 1.]
    \item All clients participate in training the federated model using clean samples (i.e., compromised clients pretend to be normal);
    \item Compromised clients employ the nearly converged global model to train local backdoor triggers via an intra-class attack strategy and conspire to aggregate the features of all local triggers to generate global triggers;
    \item Compromised clients implant the global backdoor triggers into the training data and then continue to participate in federated training;
    \item The adversary embeds the backdoor trigger into test samples to evaluate the backdoor attack success rate against the federated model.
\end{enumerate}

\begin{figure*}[!t]
\centering
\includegraphics[width=\textwidth]{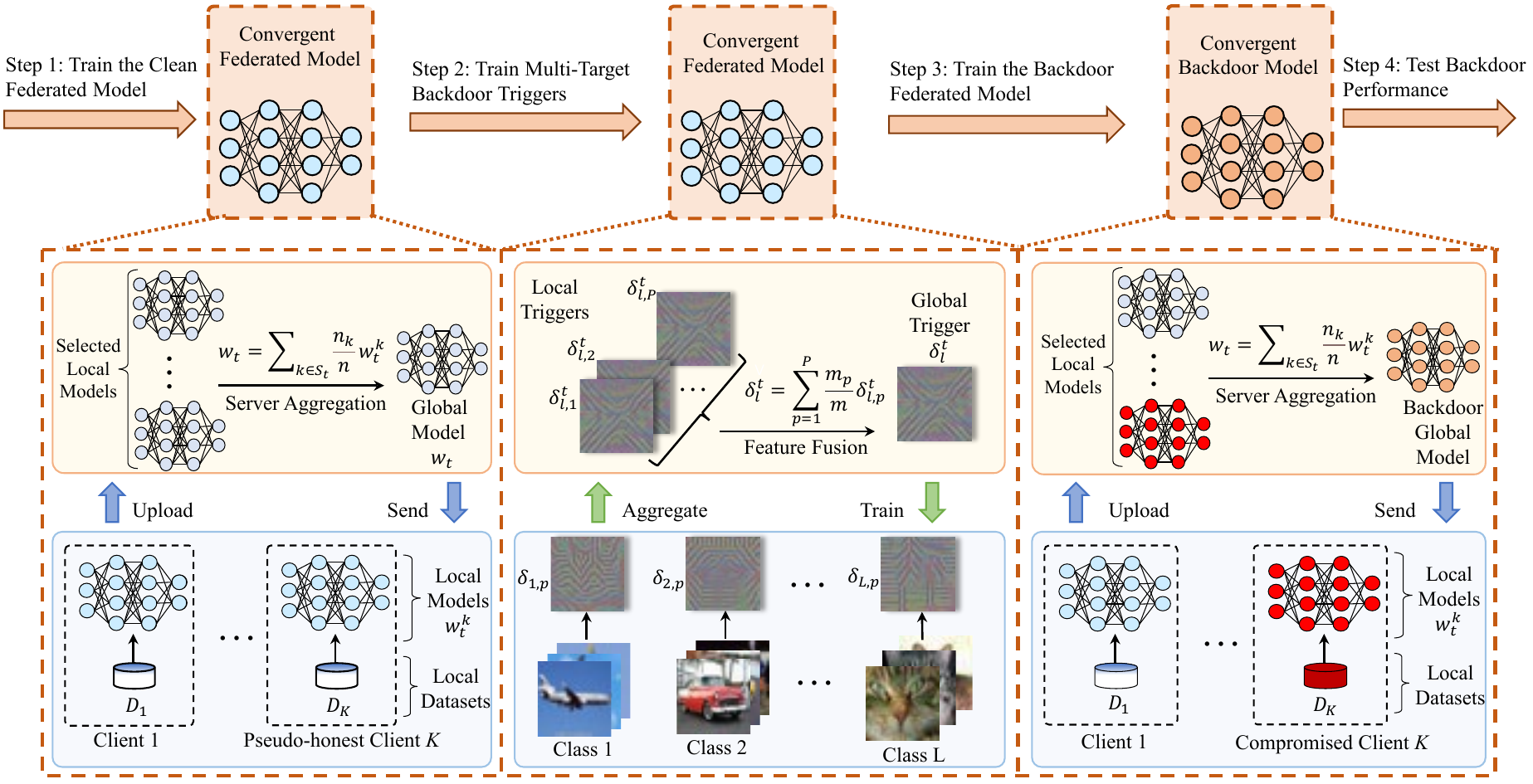}
\caption{Overview of our Multi-Target Federated Backdoor Attack (MT-FBA). Our method involves three training steps, where the top represents the state of the federated model and the bottom represents the specific operation of each step.
}
\label{Fig2}
\end{figure*}

\subsection{Multi-Target Federated Backdoor Attack}\label{Sec3.3}

\subsubsection{Train the Federated Model}\label{Sec3.3.1}

The training framework of the federated model is depicted in Step 1 of Figure \ref{Fig2}. During this step, compromised clients pretend to be honest and utilize benign data for training the local model, actively engaging in server aggregation like other regular clients. Within the horizontal federated learning paradigm, we utilize the FedAvg \cite{R1} to aggregate model parameters. The detailed training process of the federated model is described in Section \ref{Sec3.1}.

\subsubsection{Train Backdoor Triggers}\label{Sec3.3.2}

Assuming the global model approaches convergence within a specific round (determined by an indicator such as the parameter update values of the local model or the change of the global model accuracy in the test set in each round), the adversary employs the global model along with the compromised client’s benign data to train multi-target backdoor triggers. Illustrated in Step 2 of Figure \ref{Fig2}, each compromised client $p$ first independently trains a set of backdoor triggers including all target classes. We initialize the backdoor trigger by constructing an empty mask, where each pixel value can be set to 0 or initialized with random noise, matching the size of the input image. The objective of training triggers is to resolve the optimization problem presented in Eq. (\ref{Eq5}). Given that the adversary needs to obtain the optimal trigger within a short interval during the federated model training, we adopt the momentum-based gradient optimization method, which is commonly utilized in generating adversarial examples \cite{R12}. This method leverages the mini-batch gradient descent method with momentum terms to help smoothen the optimization process. In our method, the trigger training formula for class $l$ on compromised client $p$ is expressed as follows:
\begin{subequations}\label{Eq6}
\begin{align}
x_{l,i}^{t}&=x_{l,i}+\delta_{l,p}^{t},\label{Eq6A}\\
g^{t+1}&=\mu \cdot g^t + \frac{\nabla_{\delta_{l,p}^{t}}\mathcal{L}(x_{l,i}^{t},y_l)}{\|\nabla_{\delta_{l,p}^{t}}\mathcal{L}(x_{l,i}^{t},y_l)\|_1}, \label{Eq6B}\\
\delta_{l,p}^{t+1}&=\delta_{l,p}^{t}-\alpha^{t} \cdot g^{t+1}, \label{Eq6C}
\end{align}
\end{subequations}
where $g^t$ represents the gradient accumulated in the previous $t$ optimizations, $\mu$ is the decay factor, $\alpha^{t}$ is the learning rate, and $\nabla$ is the gradient of the loss function $\mathcal{L}$ with respect to the trigger $\delta_{l,p}^{t}$ in $t$-th iteration. Specifically, we require $\|\delta_{l,p}^{t}\|_\infty<\epsilon$, where $\epsilon$ is a bound intended to make the trigger less detectable. By traversing the data of each class $l$ on client $p$, we will obtain local backdoor triggers for all target classes.

As depicted in Step 2 of Figure \ref{Fig2} again, once all compromised clients have completed training their respective backdoor triggers, the adversarial coordinator (resembling a server in federated learning) conducts aggregation on all local triggers across compromised clients class by class. This process will obtain global backdoor triggers corresponding to all target classes. The aggregation for local backdoor triggers of class $l$ is outlined as follows:
\begin{equation}\label{Eq9}
\delta_l^{t}=\sum_{p=1}^{P}{\frac{m_p}{m}\delta_{l,p}^{t}},\ l\in\left\{1,2,\ldots,L\right\},
\end{equation}
where $P$ denotes the total number of compromised clients, and $L$, $m$ represents the total number of classes and samples in the training data of all compromised clients, respectively. $\delta_{l,p}^{t}$ represents backdoor trigger of the target $l$ on client $p$. The detailed training process for backdoor triggers is shown in Algorithm \ref{Alg1}.

\begin{algorithm}[!t]
\caption{Training backdoor triggers}
\label{Alg1}
\begin{algorithmic}
\Require
$P$ compromised clients with datasets $\mathcal{D}^{com}_1$, $\mathcal{D}^{com}_2$, ..., $\mathcal{D}^{com}_P$ and near-convergent federated classification model sequence $\mathcal{F}_1, \mathcal{F}_2,...,\mathcal{F}_P$, loss function $\mathcal{L}$, communication rounds $R$ and local epochs $E$ for training backdoor triggers; The norm size $\epsilon$ of  each type of backdoor trigger $\delta_l$, and decay factor $\mu$;
\Ensure
A backdoor trigger set $\{\delta_l\}$ trained for all classes $L$;
\State
\State \textbf{Adversary Coordinator Executes}:
\State $r=1$;
\State Initialize a backdoor trigger set $\{\delta_l\}^r,\ l\in\left\{1,2\ldots,L\right\}$;
\For{$r=1$ to $R$ }
    \For{$p=1$ to $P$ \textbf{in parallel}}
    \State // Traverse all compromised clients.
    \State ${\{\delta_l\}}_p^r\leftarrow$ Compromised Client Update($\{\delta_l\}^r$);
\EndFor
\For{$l=1$ to $L$}
\State Aggregate $\{\delta_l\}_p^r$ by Eq. (\ref{Eq9});
\EndFor
\EndFor
\State
\State \textbf{Compromised Client Update: ($\{\delta_l\}^r$)}
\For{$e=1$ to $E$}
\For{$l=1$ to $L$}
\State Iteratively load $x_{l,i}$ from target $\mathcal{D}^{com}_{p,l}$;
\State $x_{l,i}^0=x_{l,i}+\delta_l$;  //Embed triggers into samples;
\State Input $x_{l,i}^0$ to $\mathcal{F}_p$;
\State Update $\delta_{l,p}$ by Eq. (\ref{Eq6});
\EndFor
\EndFor
\end{algorithmic}
\end{algorithm}

\subsubsection{Train the Backdoor Model} \label{Sec3.3.3}

Once the optimal global backdoor triggers are obtained, the adversary embeds them into the training data of compromised clients to generate backdoor samples. Subsequently, the compromised clients employ these backdoor samples to participate in the federated training until the global model converges. The training process of the federated model during this period is illustrated in Step 3 in Figure \ref{Fig2}. The key distinction from Step 1 is that the samples from compromised clients are replaced with backdoor samples, resulting in the local models uploaded to the server carrying the backdoor information. In this case, federated training as a whole aims to optimize the following problems:
\begin{equation}\label{Eq10}
 w_{t}^*=\underset{w_t}{\text{arg max}}( \sum_{i\in S_{t}^{com}} \mathbb{P} \left[\mathcal{F}\left(x_i^{com};w_t \right) = y_i \right] + \sum_{i\in S_{t}^{cln}}{\mathbb{P} \left[ \mathcal{F}\left( x_i; w_t \right)=y_i\right]}),
\end{equation}
where $x_i^{com}$ represents the backdoor examples, and since we adopt an intra-class attack, the label of the backdoor sample is consistent with the original label. In addition, it is essential to note that the probability of a compromised client being selected by the server follows a hypergeometric distribution, ranging from 0 to min$(P, cK)$. When the number of compromised clients is relatively small, even with continuous poisoning, most communication rounds may not include any compromised clients. In such cases, there is a risk that the backdoor information might be forgotten by the global model. The relationship between the success rate of backdoor attacks and the number of compromised clients will be discussed in Section \ref{Sec4}.

\subsubsection{Test the Attack Success Rate for All Backdoor Triggers}\label{Sec3.3.4}

In the inference phase of the federated model, the target trigger is employed to activate the backdoor of the global model for target class samples. For instance, by employing the $l$-class backdoor trigger, all test samples can be manipulated to serve as backdoor samples, causing the model to classify them into class $l$. The average backdoor attack success rate obtained through all backdoor triggers represents the overall performance. Furthermore, the accuracy of the global model on the main machine learning task (i.e. accuracy on clean samples) is also used as a metric for backdoor attack evaluation. We expect a superior federated backdoor attack to achieve good performance in both backdoor attack success rate and accuracy on major machine learning tasks.

\subsection{Convergence Analysis}\label{Sec appendix_C}
\subsubsection{Notation}\label{SecB1}
Assume there are $P$ compromised clients. Each compromised client $p$ locally trains the backdoor trigger to optimize the following objective:
\begin{equation}\label{Eq11}
\delta_{l,p}^*=\underset{\delta\in\Delta}{\text{arg min}}{\sum_{(x_{l,i},y_{l,i})\in D_{l,p}}{\mathcal{L}_p(\mathcal{F}_p \left(x_{l,i}+\delta_{l,p} \right),y_{l,i})}},
\end{equation}
where $x_{l,i}$ are the randomly sampled clean samples, $y_{l,i}$ are the ground-truth labels, and $\delta_{l,p}$ denotes the backdoor trigger for class $ l\in[0,1,...,L-1]$ in the compromised client $p$. Here, $\mathcal{F}_p$ and $\mathcal{L}_p$ are the local model and the loss function, respectively.

For subsequent proofs, assume the intermediate training result of the backdoor trigger on the client $p$ is $\zeta_p$. We denote $x_{l,i}$ and $y_{l,i}$ as $x_p$ and $y_p$ on the client $p$. The gradient descent method, ignoring the momentum term, is expressed as follows:
\begin{subequations}\label{Eq12}
\begin{align}
x_p^{t}&=x_p+\zeta_p^{t}, \label{Eq12A} \\
\zeta_p^{t+1}&=\zeta_p^{t}-\alpha \cdot \nabla_{\zeta_p^{t}}\mathcal{L}_p(\mathcal{F}_p (x_p^t),y_p), \label{Eq12B}
\end{align}
\end{subequations}
where $t$ is the epoch indicator. Suppose that each compromised client trains the backdoor trigger $T$ times in total, and performs feature aggregation every $E$ epochs, resulting in $\frac{T}{E}$ aggregations. The aggregated global backdoor trigger $\delta^t$ is:
\begin{equation}\label{Eq13}
\delta^t =\sum_{p=1}^{P}{\frac{m_p}{m}\zeta_{p}^t},
\end{equation}
where $m_p$ denotes the number of training samples of client $p$, and $m$ is the total number of training samples across all compromised clients. Let $\mathcal{L}^*$ and $\mathcal{L}_p^*$ be the minimum values of the global loss function $\mathcal{L}(\delta)$ and the local loss function $\mathcal{L}_p(\zeta_p)$, respectively. $\Gamma =\mathcal{L}^*(\delta)-\sum_{p=1}^{P}\frac{m_p}{m}\mathcal{L}_p^*(\zeta_p)$ quantifies the degree of non-IID among all compromised clients.

To prove convergence of training the backdoor trigger, define two virtual sequences: $\overline{\delta^t}=\sum ^{P}_{p=1}\frac{m_p}{m}\delta_p^t$  and $\overline{\zeta ^t}=\sum_{p=1}^{P}\frac{m_p}{m}\zeta_p^t$. When $t$ is a multiple of $E$, we can only compute $\overline{\delta^t}$. Otherwise, both are inaccessible. We define $\overline{g^t}=\sum_{p=1}^{P}\frac{m_p}{m}\nabla \mathcal{L}_p(\delta_p^t)$ and $g^t=\sum_{p=1}^{P}\frac{m_p}{m}\nabla \mathcal{L}_p(\delta_p^t, x_p^t)$. Therefore, $\overline{\zeta ^t}=\overline{\delta ^t}-\alpha^t g^t$ and $\mathbb{E}g^t=\overline{g^t}$.

\subsubsection{Proof}\label{SecB2}

For $P$ compromised clients, we have the following assumptions on the loss functions $\mathcal{L}_1, \mathcal{L}_2, ..., \mathcal{L}_P$, which are similar to previous works on the analysis of local stochastic gradient descent (SGD) \cite{R50} and the federated average method (FedAvg) \cite{R51}.

\begin{assumption}[L-smoothness]\label{Ass1}
 For all $v$ and $w$, $\mathcal{L}_p(v)\leq \mathcal{L}_p(w) + (v-w)^T\nabla \mathcal{L}_p(w)+\frac{L}{2}\|v-w\|_2^2. $
\end{assumption}

\begin{assumption}[$\mu$-strongly convex]\label{Ass2}
 For all $v$ and $w$,
$\mathcal{L}_p(v)\geq \mathcal{L}_p(w) + (v-w)^T\nabla \mathcal{L}_p(w)+\frac{\mu}{2}\|v-w\|_2^2. $
\end{assumption}

\begin{assumption}\label{Ass3}
Let $x^t_p$ be sampled from the $p$-th compromised client’s training data uniformly at random. $\mathbb{E}\| \nabla \mathcal{L}_p(\zeta^t_p, x^t_p)-\nabla \mathcal{L}_p(\zeta^t_p) \|^2 \leq \sigma _p^2$ for $p=1,2, ..., P$.
\end{assumption}

\begin{assumption}\label{Ass4}
The expectation of the squared norm of the stochastic gradient is uniformly bounded, i.e., $\mathbb{E}\| \nabla \mathcal{L}_p(\zeta^t_p, x^t_p) \|^2 \leq G^2$ for all $p=1,2, ..., P$ and $t=1,..., T-1$.
\end{assumption}

\begin{theorem}\label{The1}
Let Assumption \ref{Ass1} to \ref{Ass4} hold and $L, \mu, \sigma_p, G$ be defined therein. Specify $\xi=\frac{L}{\mu}$, $\lambda=\max\{8\xi, E\}$ and the learning rate $\alpha^t=\frac{2}{\mu(\lambda+t)}$. Training the backdoor triggers in all compromised clients satisfies:
\begin{equation}\label{Eq14}
\mathbb{E}\left[\mathcal{L}(\delta^T)\right]-\mathcal{L}^*\leq \frac{\xi}{\lambda+T-1}\left(\frac{2B}{\mu}+\frac{\mu \lambda}{2} \mathbb{E}\| \delta^1-\delta^* \|^2 \right) ,
\end{equation}
where
\begin{equation}\label{Eq15}
B=\sum_{p=1}^{P} \left(\frac{m_p}{m} \right)^2 \sigma_p^2 + 6L\Gamma +8(E-1)^2G^2.
\end{equation}
\end{theorem}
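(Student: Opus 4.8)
\section*{Proof proposal}

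The plan is to follow the standard template for analyzing FedAvg-type algorithms with periodic averaging and diminishing step sizes (as in the local-SGD / FedAvg analyses cited just before the statement), transcribed to the trigger variables $\delta_p^t,\zeta_p^t$. The overall strategy has three stages: first establish a one-step recursion for the squared distance $\mathbb{E}\|\overline{\delta^t}-\delta^*\|^2$ between the averaged trigger and the optimum $\delta^*$; then solve that recursion by induction using the prescribed learning rate $\alpha^t=\frac{2}{\mu(\lambda+t)}$; and finally convert the distance bound into the claimed function-value gap via $L$-smoothness (Assumption~\ref{Ass1}), which yields $\mathbb{E}[\mathcal{L}(\delta^T)]-\mathcal{L}^*\le \frac{L}{2}\mathbb{E}\|\delta^T-\delta^*\|^2$ after recognizing $\xi=L/\mu$.

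First I would carry out the one-step analysis. Starting from $\overline{\zeta^t}=\overline{\delta^t}-\alpha^t g^t$, I expand $\|\overline{\zeta^t}-\delta^*\|^2=\|\overline{\delta^t}-\delta^*\|^2-2\alpha^t\langle \overline{\delta^t}-\delta^*,\,g^t\rangle+(\alpha^t)^2\|g^t\|^2$ and take expectations. Since $\mathbb{E}g^t=\overline{g^t}$, the cross term collapses to $\langle \overline{\delta^t}-\delta^*,\,\overline{g^t}\rangle$, and splitting $\mathbb{E}\|g^t\|^2$ into its variance plus $\|\overline{g^t}\|^2$ lets Assumption~\ref{Ass3} contribute the term $\sum_p (m_p/m)^2\sigma_p^2$. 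The delicate point is that $\overline{g^t}$ averages gradients taken at the \emph{distinct} local points $\delta_p^t$, not at $\overline{\delta^t}$. I would handle the remaining cross and gradient-norm terms by combining $\mu$-strong convexity (Assumption~\ref{Ass2}) with $L$-smoothness per client (using the consequence $\|\nabla\mathcal{L}_p\|^2\le 2L(\mathcal{L}_p-\mathcal{L}_p^*)$), which produces the contraction factor $(1-\mu\alpha^t)$ together with the heterogeneity contribution $6L\Gamma\,(\alpha^t)^2$, where $\Gamma$ is exactly the non-IID gap fixed in the notation.

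The main obstacle, and the technically heaviest step, is controlling the \emph{client drift} $\sum_p (m_p/m)\,\mathbb{E}\|\overline{\delta^t}-\delta_p^t\|^2$ accumulated over the up to $E-1$ local iterations since the last feature aggregation (Eq.~(\ref{Eq13})). Here I would use that between two aggregations each $\delta_p^t$ moves by at most $E-1$ gradient steps, invoke the uniform second-moment bound $G^2$ (Assumption~\ref{Ass4}), and use that $\alpha^t$ varies slowly over one averaging window (so $\alpha^t\le 2\alpha^{t+E}$). Summing the per-step displacements and applying Jensen/Cauchy--Schwarz gives a drift bound of order $4(E-1)^2G^2(\alpha^t)^2$, which, after the factor $2$ inherited from the cross-term expansion, becomes the $8(E-1)^2G^2$ summand of $B$ in Eq.~(\ref{Eq15}). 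Collecting the variance, heterogeneity, and drift terms, and noting that averaging at an aggregation round is linear so $\mathbb{E}\|\overline{\delta^{t+1}}-\delta^*\|^2=\mathbb{E}\|\overline{\zeta^t}-\delta^*\|^2$ holds whether or not round $t+1$ triggers an aggregation, I obtain the recursion $\mathbb{E}\|\overline{\delta^{t+1}}-\delta^*\|^2\le (1-\mu\alpha^t)\,\mathbb{E}\|\overline{\delta^t}-\delta^*\|^2+(\alpha^t)^2 B$.

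Finally I would close the argument by induction. With $\lambda=\max\{8\xi,E\}$ and $\alpha^t=\frac{2}{\mu(\lambda+t)}$, I claim $\mathbb{E}\|\overline{\delta^t}-\delta^*\|^2\le \frac{v}{\lambda+t-1}$ for $v=\max\{\frac{4B}{\mu^2},\,\lambda\,\mathbb{E}\|\delta^1-\delta^*\|^2\}$, verifying the base case $t=1$ and propagating the bound through the recursion using $1-\mu\alpha^t=1-\frac{2}{\lambda+t}$ and $\lambda\ge 8\xi\ge 1$. Setting $t=T$ and applying the smoothness conversion gives $\mathbb{E}[\mathcal{L}(\delta^T)]-\mathcal{L}^*\le \frac{L}{2}\cdot\frac{v}{\lambda+T-1}$; bounding $v$ by the sum of its two candidates and factoring out $\xi=L/\mu$ then reproduces exactly Eq.~(\ref{Eq14}) with the $B$ of Eq.~(\ref{Eq15}).
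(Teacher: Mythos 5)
Your proposal is correct and follows essentially the same route as the paper: both reduce the problem to the standard FedAvg convergence template of Li et al.\ \cite{R51} --- a one-step contraction with variance, heterogeneity ($6L\Gamma$), and client-drift ($8(E-1)^2G^2$) terms yielding $\Delta^{t+1}\le(1-\mu\alpha^t)\Delta^t+(\alpha^t)^2B$, followed by induction on $\Delta^t\le v/(\lambda+t)$ and an $L$-smoothness conversion. The only difference is cosmetic: you re-derive the three supporting bounds in sketch form where the paper imports them as Lemmas \ref{Lem1}--\ref{Lem3}, and your indexing of $\lambda$ matches the theorem statement directly rather than the shifted $\lambda=\max\{8\xi,E\}-1$ used inside the paper's proof.
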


For proof, we exploit three technical lemmas from \cite{R51}.
\begin{lemma}[Results of one step SGD]\label{Lem1}
According to Assumption \ref{Ass1} and \ref{Ass2}. If $\alpha^t\leq \frac{1}{4L}$, we can get
\begin{equation}\label{Eq16}
 \mathbb{E}\| \overline{\zeta^{t+1}}-\delta^* \|^2 \leq (1-\alpha^t \mu) \mathbb{E}\| \overline{\delta^t}-\delta^*\| + (\alpha^t)^2 \mathbb{E}\|g^t-\overline{g^t}\|^2 + 6L(\alpha^t)^2\Gamma + 2\mathbb{E}\sum_{p=1}^{P}\frac{m_p}{m}\|\overline{\delta^t}-\delta_p^t \|^2,
\end{equation}
where $\Gamma =\mathcal{L}^*-\sum_{p=1}^{P}\frac{m_p}{m}\mathcal{L}_p^*\geq0$
\end{lemma}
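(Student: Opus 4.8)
The plan is to reproduce the one-step analysis of (virtual) local SGD. I would start from the virtual update $\overline{\zeta^{t+1}}=\overline{\delta^t}-\alpha^t g^t$ recorded in the notation, split the aggregate stochastic gradient as $g^t=\overline{g^t}+(g^t-\overline{g^t})$, and expand the squared distance to the optimum:
\[
\|\overline{\zeta^{t+1}}-\delta^*\|^2=\|\overline{\delta^t}-\alpha^t\overline{g^t}-\delta^*\|^2-2\alpha^t\langle\overline{\delta^t}-\alpha^t\overline{g^t}-\delta^*,\,g^t-\overline{g^t}\rangle+(\alpha^t)^2\|g^t-\overline{g^t}\|^2.
\]
Taking expectation and using the unbiasedness $\mathbb{E}g^t=\overline{g^t}$ kills the middle cross term (the other factor is measurable given the history), and the last term is exactly the noise summand $(\alpha^t)^2\mathbb{E}\|g^t-\overline{g^t}\|^2$ of Eq.~(\ref{Eq16}). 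It then remains to bound the deterministic ``clean'' part $\|\overline{\delta^t}-\alpha^t\overline{g^t}-\delta^*\|^2$; throughout I read the leading term of Eq.~(\ref{Eq16}) as the squared norm $\|\overline{\delta^t}-\delta^*\|^2$.

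For the clean part I would expand it as $\|\overline{\delta^t}-\delta^*\|^2-2\alpha^t\langle\overline{\delta^t}-\delta^*,\overline{g^t}\rangle+(\alpha^t)^2\|\overline{g^t}\|^2$ and write $\overline{g^t}=\sum_{p}\frac{m_p}{m}\nabla\mathcal{L}_p(\delta_p^t)$. The quadratic term is controlled by convexity of $\|\cdot\|^2$ together with the smoothness consequence $\|\nabla\mathcal{L}_p(w)\|^2\le 2L(\mathcal{L}_p(w)-\mathcal{L}_p^*)$ of Assumption~\ref{Ass1}. For the inner product I would decompose $\overline{\delta^t}-\delta^*=(\overline{\delta^t}-\delta_p^t)+(\delta_p^t-\delta^*)$: the first piece is handled by Young's inequality $-2\langle a,b\rangle\le\frac{1}{\alpha^t}\|a\|^2+\alpha^t\|b\|^2$, which after the $\alpha^t$ prefactor produces one copy of the divergence term $\sum_p\frac{m_p}{m}\|\overline{\delta^t}-\delta_p^t\|^2$ plus a gradient-norm term (again bounded by smoothness); the second piece is handled by $\mu$-strong convexity (Assumption~\ref{Ass2}), yielding $-\alpha^t\mu\sum_p\frac{m_p}{m}\|\delta_p^t-\delta^*\|^2$ plus function-value gaps. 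Convexity of $\|\cdot\|^2$ converts the strong-convexity term into $-\alpha^t\mu\|\overline{\delta^t}-\delta^*\|^2$, which merges with the leading term to give the contraction factor $(1-\alpha^t\mu)$.

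The hard part will be the final bookkeeping of the function-value gaps, which must collapse into exactly $6L(\alpha^t)^2\Gamma$ while holding the divergence coefficient at exactly $2$. After collecting terms the gaps read $-2\alpha^t(1-2L\alpha^t)\sum_p\frac{m_p}{m}(\mathcal{L}_p(\delta_p^t)-\mathcal{L}_p^*)+2\alpha^t\Gamma$, where I use $\sum_p\frac{m_p}{m}(\mathcal{L}_p(\delta^*)-\mathcal{L}_p^*)=\mathcal{L}^*-\sum_p\frac{m_p}{m}\mathcal{L}_p^*=\Gamma$. Setting $\gamma_t=2\alpha^t(1-2L\alpha^t)$, I would further split $\mathcal{L}_p(\delta_p^t)-\mathcal{L}_p^*=(\mathcal{L}_p(\delta_p^t)-\mathcal{L}_p(\overline{\delta^t}))+(\mathcal{L}_p(\overline{\delta^t})-\mathcal{L}_p^*)$, lower-bound the first difference by the first-order convexity inequality, apply Young's inequality with weight $\alpha^t$, and invoke smoothness once more. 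This surfaces a nonnegative (hence droppable) multiple of $\mathcal{L}(\overline{\delta^t})-\mathcal{L}^*$, a second divergence contribution with coefficient $\frac{\gamma_t}{2\alpha^t}=1-2L\alpha^t\le 1$, and a residual $\Gamma$-coefficient $2\alpha^t-\gamma_t(1-\alpha^t L)=2L(\alpha^t)^2(3-2L\alpha^t)\le 6L(\alpha^t)^2$. The stepsize constraint $\alpha^t\le\frac{1}{4L}$ enters twice, keeping $1-2L\alpha^t\ge\frac12$ and $1-\alpha^t L\ge\frac34$ positive so that the discarded terms are genuinely nonnegative and the two divergence contributions sum to coefficient $1+1=2$. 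Assembling these estimates and taking total expectation then yields Eq.~(\ref{Eq16}).
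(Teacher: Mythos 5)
Your proposal is correct and is essentially the argument the paper itself relies on: the paper does not prove Lemma \ref{Lem1} but defers to \cite{R51}, and your derivation --- expanding around the virtual update $\overline{\zeta^{t+1}}=\overline{\delta^t}-\alpha^t g^t$, killing the cross term by unbiasedness, applying Young's inequality plus the smoothness bound $\|\nabla\mathcal{L}_p(w)\|^2\le 2L(\mathcal{L}_p(w)-\mathcal{L}_p^*)$ to the $\overline{\delta^t}-\delta_p^t$ piece, strong convexity plus Jensen for the $(1-\alpha^t\mu)$ contraction, and the $\gamma_t=2\alpha^t(1-2L\alpha^t)$ bookkeeping that yields the $\Gamma$-coefficient $2\alpha^t-\gamma_t(1-\alpha^t L)=2L(\alpha^t)^2(3-2L\alpha^t)\le 6L(\alpha^t)^2$ and the divergence coefficient $1+(1-2L\alpha^t)\le 2$ --- is exactly the proof of the corresponding one-step lemma in that reference, with the stepsize condition $\alpha^t\le\frac{1}{4L}$ used precisely where you say it is. You are also right to read the first term on the right of Eq.~(\ref{Eq16}) as the squared norm $\mathbb{E}\|\overline{\delta^t}-\delta^*\|^2$; the missing exponent is a typo in the paper's statement.
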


\begin{lemma}[Bounding the variance ]\label{Lem2}
Assume Assumption \ref{Ass3} holds. Then
\begin{equation*}\label{Eq17}
\mathbb{E}\|g^t-\overline{g^t} \|^2 \leq \sum_{p=1}^{P} \left(\frac{m_p}{m} \right)^2 \sigma_p^2
\end{equation*}
\end{lemma}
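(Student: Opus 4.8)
The plan is to exploit the unbiasedness of the per-client stochastic gradients together with the independence of mini-batch sampling across compromised clients, thereby collapsing the variance of the aggregated gradient into a weighted sum of per-client variances that Assumption \ref{Ass3} already controls. First I would rewrite the difference as a single weighted sum, $g^t-\overline{g^t}=\sum_{p=1}^{P}\frac{m_p}{m}\,\eta_p$, where $\eta_p:=\nabla \mathcal{L}_p(\delta_p^t,x_p^t)-\nabla \mathcal{L}_p(\delta_p^t)$ is the per-client gradient noise. Because $x_p^t$ is drawn uniformly at random from client $p$'s data, the full gradient $\nabla\mathcal{L}_p(\delta_p^t)$ is exactly the expectation of the stochastic gradient $\nabla\mathcal{L}_p(\delta_p^t,x_p^t)$ over that draw, so each $\eta_p$ is zero-mean, i.e. $\mathbb{E}[\eta_p]=0$; this is precisely the unbiasedness recorded as $\mathbb{E}g^t=\overline{g^t}$ in the notation.

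Next I would expand the squared norm of the weighted sum and separate the diagonal from the off-diagonal contributions:
\[
\mathbb{E}\|g^t-\overline{g^t}\|^2=\sum_{p=1}^{P}\left(\frac{m_p}{m}\right)^2\mathbb{E}\|\eta_p\|^2+\sum_{p\neq q}\frac{m_p m_q}{m^2}\,\mathbb{E}\langle \eta_p,\eta_q\rangle .
\]
The crucial step is to show the cross terms vanish. Since each compromised client samples its mini-batch from its own private dataset independently of the other clients, the noise variables $\eta_p$ and $\eta_q$ are independent for $p\neq q$; combined with $\mathbb{E}[\eta_p]=0$ this yields $\mathbb{E}\langle\eta_p,\eta_q\rangle=\langle\mathbb{E}[\eta_p],\mathbb{E}[\eta_q]\rangle=0$. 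Hence only the diagonal survives, leaving $\mathbb{E}\|g^t-\overline{g^t}\|^2=\sum_{p=1}^{P}(m_p/m)^2\,\mathbb{E}\|\eta_p\|^2$.

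Finally I would invoke Assumption \ref{Ass3} directly, which bounds $\mathbb{E}\|\eta_p\|^2=\mathbb{E}\|\nabla\mathcal{L}_p(\delta_p^t,x_p^t)-\nabla\mathcal{L}_p(\delta_p^t)\|^2\leq\sigma_p^2$ for every $p$, delivering the claimed estimate $\mathbb{E}\|g^t-\overline{g^t}\|^2\leq\sum_{p=1}^{P}(m_p/m)^2\sigma_p^2$. The only genuinely load-bearing ingredient beyond what is stated is the cross-client sampling independence required to annihilate the off-diagonal terms, so I expect that to be the point most in need of explicit justification; the remainder is a mechanical expansion followed by a term-by-term application of Assumption \ref{Ass3}. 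If one prefers not to assume full independence, I would note as a \emph{fallback} that the argument only uses zero cross-covariance, $\mathbb{E}\langle\eta_p,\eta_q\rangle=0$, so the bound persists under the weaker hypothesis that the clients' sampling noises are merely uncorrelated.
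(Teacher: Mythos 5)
Your proof is correct and is essentially the argument the paper relies on: the paper gives no in-text proof of Lemma~\ref{Lem2}, deferring to the cited FedAvg convergence analysis \cite{R51}, whose proof is exactly your decomposition $g^t-\overline{g^t}=\sum_{p}\frac{m_p}{m}\eta_p$, vanishing cross terms from independent per-client sampling, and a term-by-term application of Assumption~\ref{Ass3}. Your fallback remark is also the right refinement — since $\delta_p^t$ and $\delta_q^t$ are correlated through past aggregations, the clean statement is conditional independence of the current mini-batch draws given the history, which yields zero conditional cross-covariance and hence, by the tower property, exactly the uncorrelatedness your argument needs.
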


\begin{lemma}[Bounding the divergence of $\{\delta^t_p\}$ ]\label{Lem3}
Assume Assumption \ref{Ass4}, that $\alpha^t$ is non-increasing and $\alpha^t \leq 2\alpha^{t+E}$ for any $t\geq 0$. There is always
\begin{equation*}\label{Eq18}
  \mathbb{E} \left[\sum_{p=1}^{P}\frac{m_p}{m}\| \overline{\delta^t} - \delta_p^t \|^2 \right]\leq 4(\alpha^t)^2 (E-1)^2 G^2
\end{equation*}
\end{lemma}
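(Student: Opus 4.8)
The plan is to exploit the fact that, under an aggregation period of $E$, the local triggers are forced to coincide at every synchronization step, so their divergence at an arbitrary epoch $t$ is controlled entirely by the local drift accumulated since the most recent aggregation. First I would fix $t$ and let $t_0 \le t$ denote the latest epoch at which feature aggregation (Eq. (\ref{Eq13})) was performed; because aggregation occurs every $E$ epochs, we have $t-t_0 \le E-1$, and at $t_0$ all local triggers equal the aggregated value, i.e. $\delta_p^{t_0}=\overline{\delta^{t_0}}$ for every $p$.

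The key algebraic step is to eliminate the dependence on the current weighted mean $\overline{\delta^t}$. Since $\overline{\delta^t}=\sum_{p=1}^{P}\frac{m_p}{m}\delta_p^t$ is the weighted average of the $\delta_p^t$ and the weights sum to one, for any fixed point $a$ the weighted variance obeys $\sum_{p}\frac{m_p}{m}\|\delta_p^t-\overline{\delta^t}\|^2 \le \sum_{p}\frac{m_p}{m}\|\delta_p^t-a\|^2$ (the cross term in the expansion vanishes). Choosing $a=\overline{\delta^{t_0}}$, the common synchronization point, reduces the problem to bounding, for each client separately, the quantity $\|\delta_p^t-\overline{\delta^{t_0}}\|^2 = \|\delta_p^t-\delta_p^{t_0}\|^2$.

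I would then telescope the local SGD updates of Eq. (\ref{Eq12B}) to write $\delta_p^t-\delta_p^{t_0}=-\sum_{\tau=t_0}^{t-1}\alpha^\tau\nabla\mathcal{L}_p(\delta_p^\tau,x_p^\tau)$, apply the Cauchy--Schwarz (Jensen) inequality across the at most $E-1$ summands to get a factor $(t-t_0)\le E-1$ in front of the sum of squared norms, and then take expectation, invoking Assumption \ref{Ass4} to bound each $\mathbb{E}\|\nabla\mathcal{L}_p(\delta_p^\tau,x_p^\tau)\|^2$ by $G^2$. Using that $\alpha^t$ is non-increasing so that $\alpha^\tau\le\alpha^{t_0}$ on the whole range yields $\mathbb{E}\|\delta_p^t-\delta_p^{t_0}\|^2 \le (E-1)^2(\alpha^{t_0})^2 G^2$, a bound uniform in $p$ that survives weighting by $\frac{m_p}{m}$ and summing.

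The final and most delicate step is to convert the learning rate at the synchronization epoch, $\alpha^{t_0}$, back into $\alpha^t$ so the constant matches the statement. Here I would use the hypothesis $\alpha^{t_0}\le 2\alpha^{t_0+E}$ together with $t<t_0+E$ and monotonicity to conclude $\alpha^{t_0}\le 2\alpha^t$, hence $(\alpha^{t_0})^2\le 4(\alpha^t)^2$, producing exactly the claimed $4(\alpha^t)^2(E-1)^2 G^2$. The main obstacle is precisely this bookkeeping: one must verify that the telescoped index range has length at most $E-1$ rather than $E$, and align that range with the learning-rate doubling condition, so that both the quadratic factor $(E-1)^2$ and the numerical constant $4$ emerge correctly rather than as loose over-estimates.
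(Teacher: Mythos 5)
Your proof is correct and follows essentially the same route as the argument the paper defers to in \cite{R51}: locating the last synchronization epoch $t_0$ with $t-t_0\le E-1$, discarding the weighted mean via the variance inequality with anchor $\overline{\delta^{t_0}}$, telescoping the local SGD steps and applying Cauchy--Schwarz together with Assumption \ref{Ass4}, and finally converting $\alpha^{t_0}$ into $2\alpha^t$ through the condition $\alpha^{t_0}\le 2\alpha^{t_0+E}$ and monotonicity. Your bookkeeping of the index range (at most $E-1$ summands) and of the constant $4$ matches the cited proof exactly, so there is no gap.
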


\begin{proof}
From the above definition, it always ensures $\overline{\delta^{t+1}}=\overline{\zeta^{t+1}}$. Let $\Delta^t=\mathbb{E}\|\overline{\delta^t}-\delta^* \|^2$. Based on Lemma \ref{Lem1}, Lemma \ref{Lem2} and Lemma \ref{Lem3}, we can deduce that
\begin{equation}\label{Eq19}
\Delta^{t+1} \leq (1-\alpha^t \mu)\Delta^t + (\alpha^t)^2B
\end{equation}
where
\begin{equation*}
B=\sum_{p=1}^{P}\left(\frac{m_p}{m}\right)^2 \sigma_p^2 + 6L\Gamma + 8(E-1)^2G^2.
\end{equation*}

For a non-increasing learning rate, $\alpha^t=\frac{\beta}{t+\lambda}$ for some $\beta>\frac{1}{\mu}$ and $\lambda>0$ such that $\alpha^1 \leq \min\{\frac{1}{\mu}, \frac{1}{4L} \}=\frac{1}{4L}$ and $\alpha^t \leq 2\alpha^{t+E}$. We will prove $\Delta^t \leq \frac{v}{\lambda+t}$ where $v=\max\{\frac{\beta^2B}{\beta\mu-1}, (\lambda+1)\Delta^1 \}$.

Here induction is used to prove it. Firstly, the definition of $v$ ensures that it holds for $t=1$. Assume that the conclusion holds for some $t$, then
\begin{align*}
  \Delta^{t+1} & \leq (1-\alpha^t \mu)\Delta^t + (\alpha^t)^2B \\
               & \leq \left( 1-\frac{\beta\mu}{t+\lambda} \right)\frac{v}{t+\lambda} + \frac{\beta^2B}{(t+\lambda)^2} \\
               & = \frac{t+\lambda-1}{\left(t+\lambda \right)^2}v +\left[ \frac{\beta^2B}{(t+\lambda)^2} - \frac{\beta\mu-1}{\left(t+\lambda \right)^2}v  \right] \\
               & \leq \frac{v}{t+\lambda+1}.
\end{align*}

According to the $L$-smoothness of $\mathcal{L}(\cdot)$,
\begin{equation*}\label{Eq21}
  \mathbb{E}[\mathcal{L}(\overline{\delta^t})]-\mathcal{L}^* \leq \frac{L}{2}\Delta^t \leq \frac{L}{2} \frac{v}{\lambda+t}.
\end{equation*}

Specifically, if we choose $\beta=\frac{2}{\mu}, \lambda=\max\{8\frac{L}{\mu}, E\}-1$ and denote $\xi=\frac{L}{\mu}$, then $\alpha^t=\frac{2}{\mu}\frac{1}{\lambda+t}$. It can be verified that the choice of $\alpha^t$ satisfies $\alpha^t \leq 2\alpha^{t+E}$ for $t \geq 1$. So

\begin{align*}\label{Eq22}
  v &=\max \left\{\frac{\beta^2B}{\beta\mu-1}, (\lambda + 1)\Delta^1 \right\} \\
    &\leq \frac{\beta^2B}{\beta\mu-1} + (\lambda+1)\Delta^1 \\
    &\leq \frac{4B}{\mu^2}+(\lambda+1)\Delta^1
\end{align*}
and
\begin{equation*}\label{Eq23}
  \mathbb{E}[\mathcal{L}(\overline{\delta^t})]-\mathcal{L}^* \leq \frac{L}{2}\frac{v}{\lambda+t} \leq \frac{\xi}{\lambda+t}\left(\frac{2B}{\mu}+\frac{\mu(\lambda+1)}{2}\Delta^1\right)
\end{equation*}
\end{proof}

The relevant proof of Lemmas can be found in \cite{R51}.

\section{Experiments}\label{Sec4}

This section investigates the efficacy of the proposed attack method, focusing on the following aspects: 1) Examination of factors influencing the success rate of backdoor attacks (Section \ref{Sec4.2}); 2) Comparative analysis between our proposed method and patch-based methods under the state-of-the-art federated defense (Section \ref{Sec4.3}).

\subsection{Experiment Settings}\label{Sec4.1}

\textbf{Datasets}. We evaluate the effectiveness of our method using three image datasets: CIFAR10 \cite{R58}, MNIST \cite{R59} and Mini-ImageNet \cite{R60}. The federated model and training configurations follow \cite{R1, R7}, and the detailed settings are shown in Table \ref{Tab1}. In the independent and identically distributed (IID) experiments, data is divided using random sampling, while in the non-IID experiments, the data division employs the Dirichlet distribution with hyperparameter $\alpha=0.5$. Unless specified otherwise (such as Section \ref{Sec4.2.4}), the datasets of each client are IID.

\begin{table*}[!t]
\begin{center}
\caption{Training setups for the federated model in three image datasets.}
\label{Tab1}
\footnotesize
\setlength{\tabcolsep}{1.2mm}{
\footnotesize
\begin{tabular}{c c c c c c c c c c c c}
\toprule
Datasets    & Train & Test & Image Size    & Clients & Fraction & Rounds & Local Epochs & Batch Size & Model    & Optimizer & LR \\
\midrule
CIFAR10     & 50k   & 10k  & $32\times32\times3$ & 100     & 0.1      & 500    & 5           & 50               & ResNet-18 \cite{R13} & SGD       & 0.1           \\
MNIST       & 60k   & 10k  & $28\times28\times1$ & 100     & 0.1      & 100    & 5           & 50               & ResNet-18 \cite{R13} & SGD       & 0.1           \\
Mini-ImageNet       & 48k   & 12k  & $224\times224\times3$ & 100    & 0.1      & 500   &5          &128             & ResNet-18 \cite{R13} &Adam       & 0.01           \\
\bottomrule
\end{tabular}
}
\end{center}
\end{table*}

\textbf{Backdoor Trigger Settings}. The training configurations for the backdoor trigger are detailed in Table \ref{Tab2}. The settings for poisoning in the two datasets differ slightly, primarily influenced by the convergence speed during the training process. These parameters represent optimal values derived from ablation experiments. For a comprehensive analysis, please refer to Section \ref{Sec4.2} for detailed discussions.

\begin{table*}[!t]
\begin{center}
\caption{Training setups for backdoor triggers in three image datasets.}
\label{Tab2}
\setlength{\tabcolsep}{1.2mm}{
\footnotesize
\begin{tabular}{c c c c c c c c c }
\toprule
Dataset & Trigger Size & Trigger Bound & Poison Round & Poison Ratio & Trigger Rounds & Trigger Epochs & Optimizer & LR \\
\midrule
CIFAR10  & $32\times32\times3$      & 16/255        & 200 th          & 0.3          & 40            & 10             & Adam \cite{R14}      & 0.01          \\
MNIST    & $28\times28\times1$      & 64/255        & 10 th           & 0.3          & 20            & 10             & Adam \cite{R14}      & 0.01            \\
Mini-ImageNet    & $224\times224\times3$      & 16/255        & 200 th           & 0.3         & 40            & 10            & Adam \cite{R14}      & 0.01           \\
\bottomrule
\end{tabular}
}
\end{center}
\end{table*}

\textbf{Metrics}. Our method employs the backdoor Attack Success Rate (ASR) and Main task Accuracy (MA) as the evaluation criterion. The ASR quantifies the likelihood of non-target class samples being classified into the target class after the embedding of the target trigger. MA indicators the accuracy of the global model on main (benign) machine learning tasks.

\subsection{Analysis of Trigger Factors}\label{Sec4.2}

This section explores the influence of various training parameters in our proposed MT-FBA. It considers factors such as the number of compromised clients (the poison ratio), poison interval, backdoor trigger training duration, data distribution, and the trigger bound. Each factor is individually studied through ablation experiments while maintaining the remaining parameters consistent with those outlined in Table \ref{Tab2}.

Note that the experimental methodology on the MNIST dataset is the same as that of the CIFAR10 dataset, and similar conclusions were drawn from observing the experimental outcomes. For brevity, detailed experimental results on MNIST and Mini-ImageNet are provided in \ref{Sec appendix_A} and \ref{Sec appendix_B}.

\begin{figure*}[!t]
\centering
\subfloat[Poison Ratio]{\includegraphics[width=0.19\linewidth]{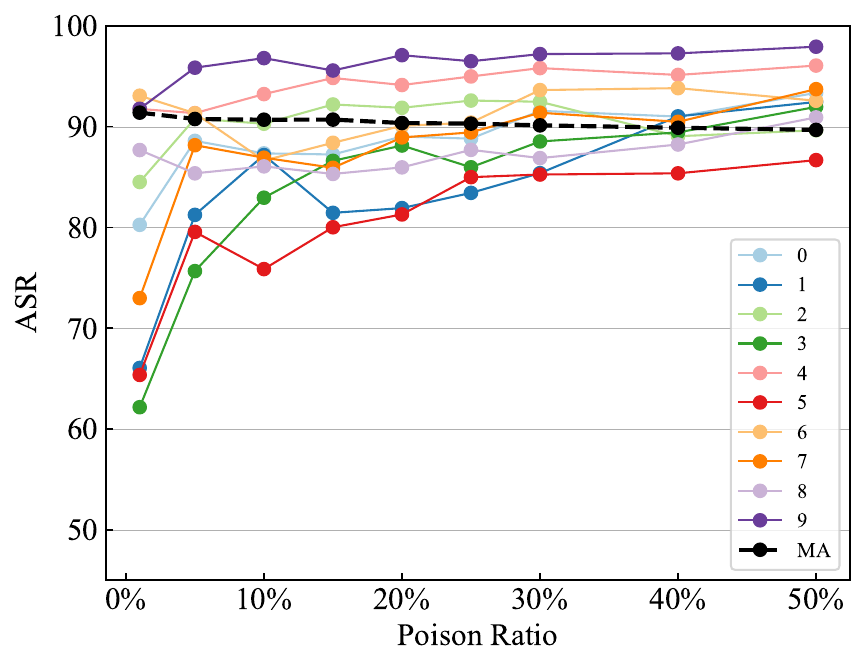}
\label{Fig3_1}}
\subfloat[Poison Point]{\includegraphics[width=0.19\linewidth]{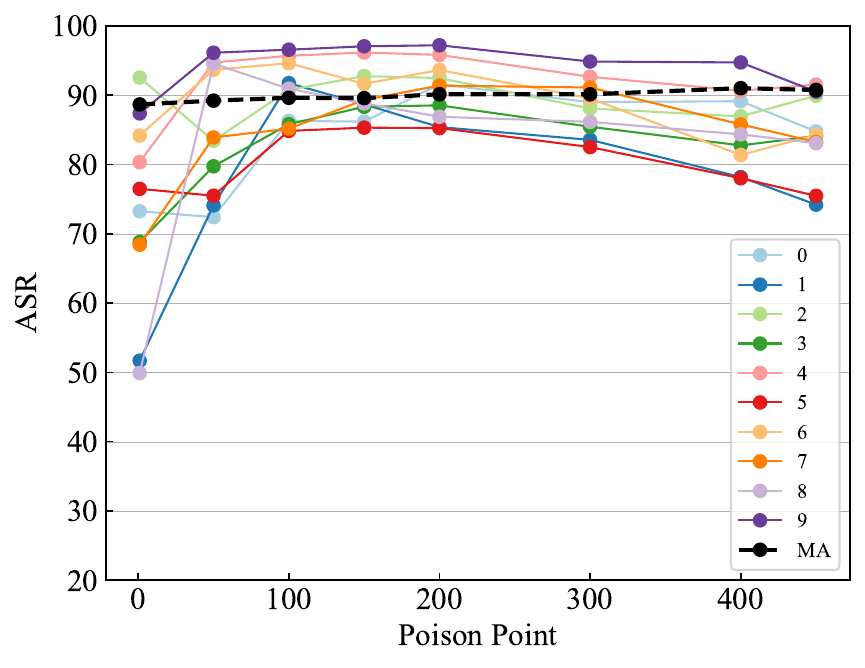}
\label{Fig3_2}}
\subfloat[Trigger Epochs]{\includegraphics[width=0.19\linewidth]{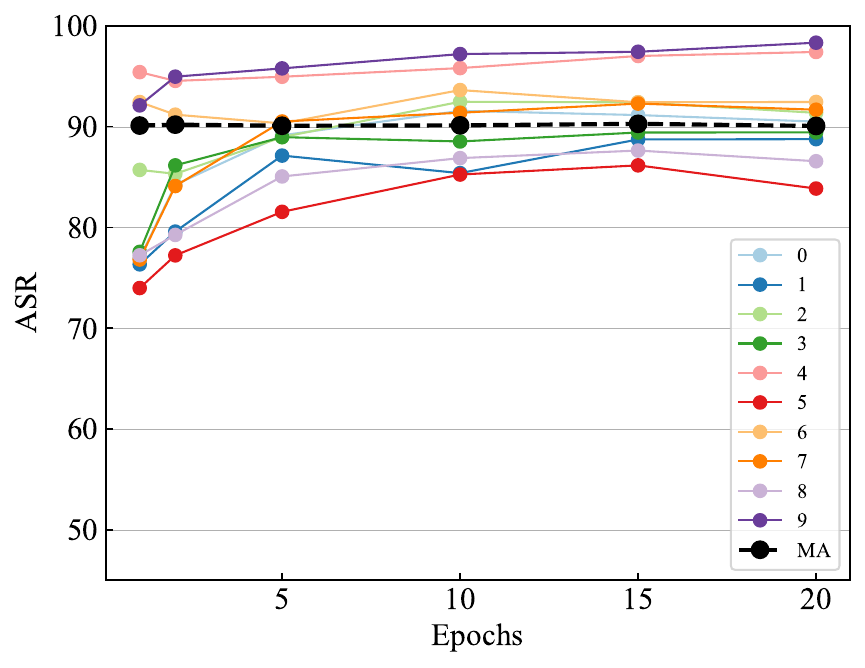}
\label{Fig3_3}}
\subfloat[Trigger Rounds]{\includegraphics[width=0.19\linewidth]{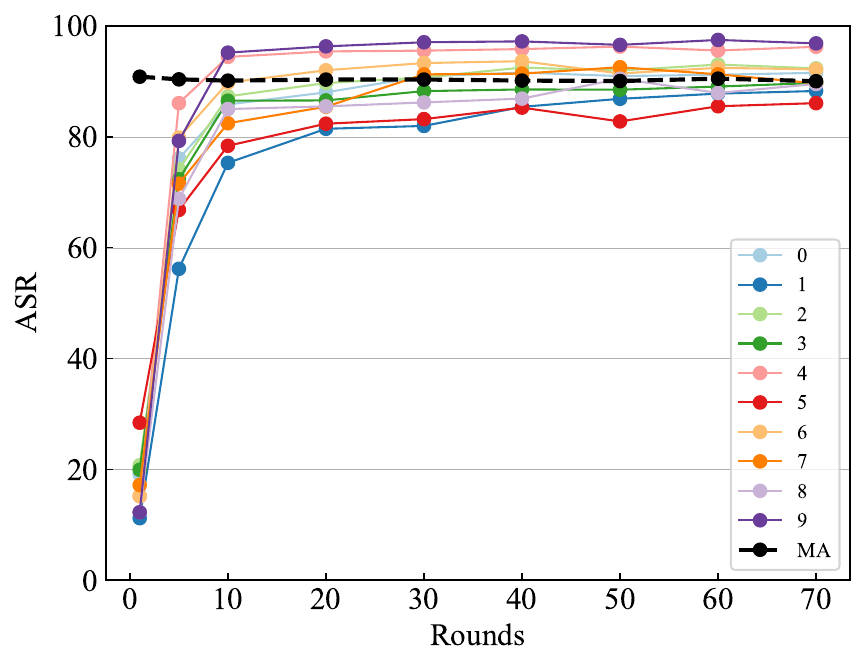}
\label{Fig3_4}}
\subfloat[Trigger Bound]{\includegraphics[width=0.19\linewidth]{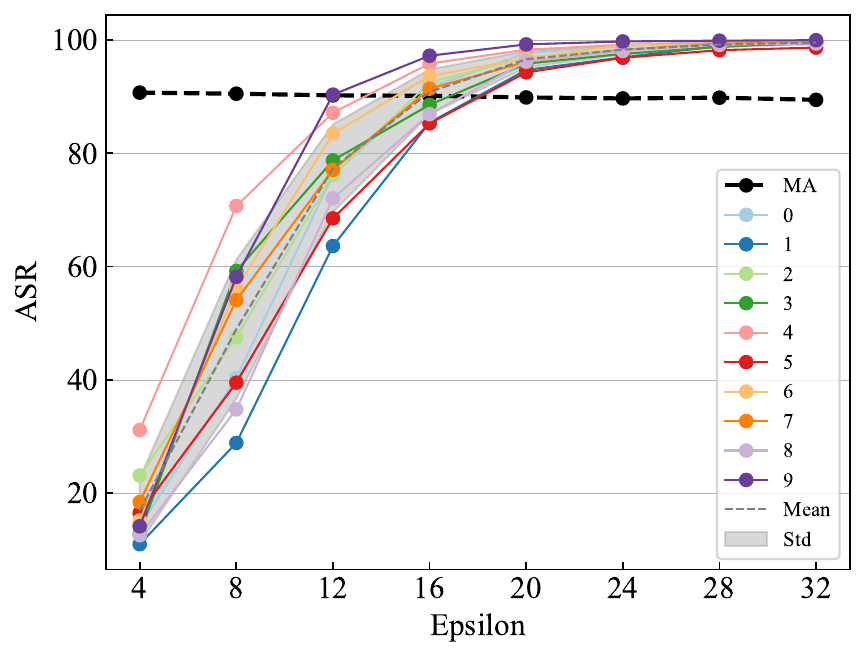}
\label{Fig3_5}}
\caption{ASR (\%) and MA (\%) for different impact factors during trigger training on the CIFAR10 dataset. Numbers represent each class of backdoor trigger.}
\label{Fig3}
\end{figure*}

\subsubsection{Effects of Poison Ratio}\label{Sec4.2.1}

The poisoning ratio, $P/K$ as mentioned in Section \ref{Sec3}, represents the ratio of compromised clients. In our series of experiments, we explored nine distinct values for $P/K$, ranging from 1\% to 50\%. These experiments were conducted using the CIFAR10 dataset, and their corresponding ASRs are depicted in Figure \ref{Fig3_1}.

The ASR for different target triggers is impacted to varying extents by the poisoning ratio. Particularly, the trigger created with target ``9" exhibits the highest attack success rate, nearly reaching 95\%. Most converge when approximately 30\% of the clients are compromised. Additionally, as the poisoning rate increases, the MA of the federated model on clean test samples slightly decreases. Therefore, we selected 30\% as the optimal poisoning rate.

Even at a low poisoning rate of 1\%, our method achieves a remarkably high ASR. Since the server randomly selects clients to participate in federated model training, most of the time, it is challenging for this compromised client to be selected. This implies that once the backdoor trigger is created, even if it is not implanted into the training set, it can be injected into the test sample during the inference phase to trigger the model's backdoor. Further discussions are provided in Section \ref{Sec5.1}.

\subsubsection{Effects of Poison Interval}\label{Sec4.2.2}

In Section \ref{Sec3.3}, backdoor trigger training starts after the near convergence of the federated model, referred to as the poisoning point. It's crucial to note that once the training data is poisoned, this sustains in all subsequent rounds. As illustrated in Figure \ref{Fig4_1}, training the federated model (ResNet-18) on the CIFAR10 dataset with IID distribution results in convergence around 200 rounds, which theoretically appears to be the optimal poisoning point.

\begin{figure}[!t]
\centering
\subfloat[IID]{\includegraphics[width=0.3\linewidth]{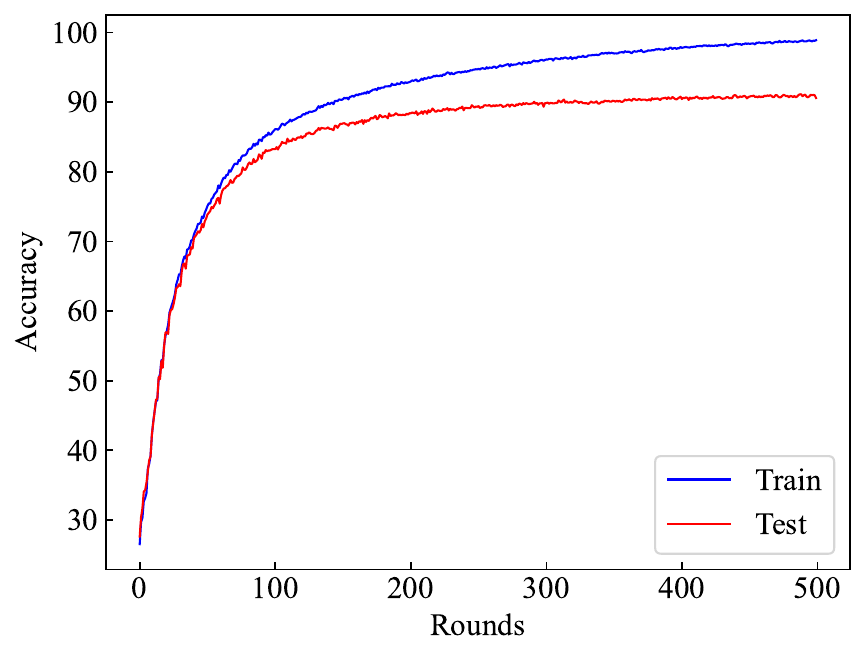}
\label{Fig4_1}}
\subfloat[non-IID]{\includegraphics[width=0.33\linewidth]{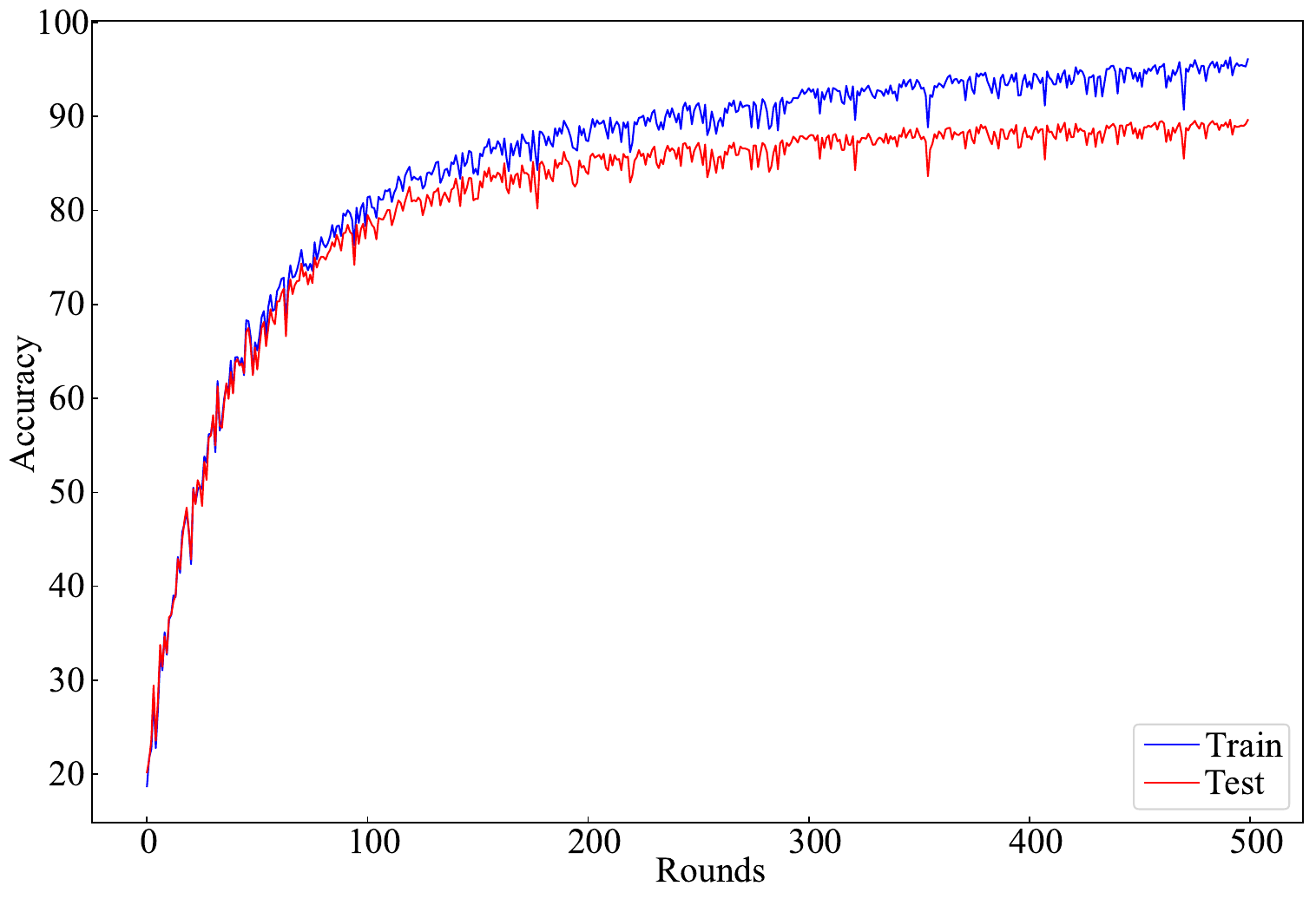}
\label{Fig4_2}}
\caption{Training the ResNet-18 model on the CIFAR10 dataset when the client data is (a) IID and (b) non-IID.}
\label{Fig4}
\end{figure}

Figure \ref{Fig3_2} presents a comparative experiment involving various poisoning points, ranging from the 1st to the 450th round. Surprisingly, poisoning at the 100th round yields a notably high backdoor attack success rate. This suggests that the federated model utilized to train the backdoor trigger only needs to be approximately close to convergence rather than trained to full convergence. Suboptimal poisoning performance in later rounds could be attributed to the model being trained for too brief a period on the poisoned data, potentially hindering its ability to retain the backdoor. Moreover, the model's accuracy on clean samples gradually increases as the poisoning point is pushed back. Therefore, the optimal poisoning point for the CIFAR10 dataset lies within the 100-200th round interval. Considering the accuracy of the global model on clean samples, we select the 200th round as the optimal poisoning point.

\subsubsection{Effects of Training Time of Trigger}\label{Sec4.2.3}

The duration of trigger training significantly impacts computational resources and communication between the compromised client and the adversary, potentially raising red flags in federated detection mechanisms. Figure \ref{Fig3_3} and  Figure \ref{Fig3_4} respectively depict the influence of local training epochs [1, 2, 5, 10, 15, 20] and communication rounds [1, 5, 10, 20, 30, 40, 50, 60, 70] on ASR. Observations indicate that epochs have a marginal impact, while the influence of communication rounds rapidly changes before stabilizing around 40 rounds. Intriguingly, this aligns with federated learning training, emphasizing the necessity for a few local epochs and more rounds to aggregate the data features of each client effectively.

\subsubsection{Effects of Trigger Bound}\label{Sec4.2.5}

In Section \ref{Sec3.3.2}, we highlight that the backdoor trigger operates within the constraints of an $\epsilon$ sphere. A smaller $\epsilon$ value limits the backdoor trigger's capacity to carry sufficient backdoor information, consequently reducing the backdoor ASR. Conversely, a larger $\epsilon$ value results in excessive modification of the original sample features by the backdoor trigger, making the backdoor sample more susceptible to detection by federated algorithms. Figure \ref{Fig3_5} illustrates the impact of varying $\epsilon$ values on the backdoor ASR, and the shaded area represents the standard deviation of the ASR for each backdoor trigger. When the trigger bound is small, the ASR for each type of backdoor trigger is low, resulting in a small standard deviation among backdoor triggers. As the trigger bound becomes larger, the ASR approaches 100\%, causing the standard deviation of the backdoor triggers to trend toward 0.  It is worth noting that as $\epsilon$ increases, the accuracy of the model on clean samples decreases. Considering the trade-off between ASR and the federated model's accuracy on clean samples, our method finally selected $\epsilon=16$, and the average backdoor ASR reached 90\%.

\subsubsection{Effects of Data Distribution}\label{Sec4.2.4}

To mitigate the potential influence of data distribution variations on experimental outcomes, the aforementioned experiments assumed that each client's data adheres to the independent and identically distributed (IID) assumption. This section explores the impact of non-IID data on the ASR. Specifically, we utilize the Dirichlet distribution as a non-IID data partitioning method with $\alpha=0.5$. Table \ref{Tab3} presents a comparison of ASR under two hypothetical scenarios on the CIFAR10 dataset, with the other parameter settings consistent with those in Tables \ref{Tab1} and \ref{Tab2}. It is observed that almost all target triggers achieve improved ASR under non-IID conditions, with an average improvement of 4.86\%. To delve deeper into the analysis, Figure \ref{Fig4_2} illustrates the federated training process under the non-IID assumption. In comparison to the training process of the IID experiment in Figure \ref{Fig4_1}, non-IID conditions introduce more oscillations and slightly lower accuracy. This should also cause the model trained under non-IID to be more vulnerable to backdoor attacks. Given that data heterogeneity in federated models is more prevalent, it serves as a reminder to consider the model's robustness to backdoor attacks when investigating federated learning with data heterogeneity.

\begin{table*}[htbp]
\begin{center}
\caption{Comparison of ASR (\%) and MA (\%) under IID and non-IID on the CIFAR10 dataset.}
\label{Tab3}
\setlength{\tabcolsep}{1.3mm}{
\footnotesize
\begin{tabular}{c c c c c c c c c c c c c}
\toprule
Data Distribution  & MA   & 0     & 1     & 2     & 3     & 4     & 5     & 6     & 7     & 8     & 9     & Average ASR \\
\midrule
IID     & 89.87 & 92.41 & 83.80 & 91.99 & 86.78 & 94.01 & 79.12 & 90.71 & 86.60 & 87.63 & 95.26 & 88.83 \\
non-IID & 86.70  &90.52 & 96.84 & 95.28 & 94.32 & 95.74 & 90.40 & 90.97 & 93.92 & 92.41 & 96.46 & 93.69   \\
\bottomrule
\end{tabular}
}
\end{center}
\end{table*}

\subsection{Compared with Patch-based Backdoor Attacks under Federated Defense}\label{Sec4.3}

As outlined in Section \ref{Sec2.3}, given the privacy of the client's local training process, current federated backdoor defense methods are typically categorized into three stages, involving operations such as filtering, truncation, and differential privacy. FLAME \cite{R25} currently stands out as the most effective federated backdoor defense method. It primarily employs clustering of local model parameters and eliminates outliers based on cosine distance. Additionally, it incorporates truncation and differential privacy to rectify backdoor models closely resembling regular local models.

The comparison results on the CIFAR10 dataset are presented in Table \ref{Tab4}, and the training settings are consistent with Table \ref{Tab1} and Table \ref{Tab2}. The ASR represents the average backdoor attack success rate obtained by all backdoor triggers. Without federated defense, DBA achieves the best ASR but significantly degrades MA. Our method maintains superior performance on both BA and MA. In the presence of FLAME, our method attains the best ASR and significantly outperforms the patch-based attack DBA. This observation also highlights the enhanced concealment capability of our method in the presence of federal defense mechanisms. It is attributed to the intra-class attack strategy and the truncated backdoor trigger, which aligns the semantic features of backdoor samples with regular samples. Consequently, the backdoor model learns a distribution similar to other regular models, avoiding detection by the FLAME defense method. In contrast, patch-based attacks, other model replacement attacks, and edge-case attacks lead the backdoor model to learn a distribution that deviates from normal samples. As a result, poison local models for other backdoor attacks are easily filtered out by the federated detection method FLAME.

\begin{table}[!t]
\begin{center}
\caption{Comparison of ASR (\%) and MA (\%) between MT-FBA and Patch-based federated backdoor attacks under the state-of-the-art federated backdoor defense.}
\label{Tab4}

\footnotesize
\begin{tabular}{l l c c c c c c c c c c c}
\toprule
\multirow{2}{*}{Attack} & \multirow{2}{*}{Dataset} & \multicolumn{2}{c}{No Defense} & \multicolumn{2}{c}{FLAME \cite{R25}}      \\
\cline{3-6}
                        &           & ASR          & MA      & ASR          & MA      \\
\midrule
Constrain-and-scale \cite{R33}      & CIFAR10    & 81.9    & 89.8         & 0        & 91.9         \\
DBA \cite{R7}                       & CIFAR10    & \underline{93.8}    & \underline{57.4}         & \underline{3.2 }     & \underline{76.2  }        \\
Edge-Case \cite{R6}                 & CIFAR10    & 42.8    & 84.3         & 4        & 79.3          \\
PGD \cite{R6}                       & CIFAR10    & 56.1    & 68.8         & 0.5      & 65.1          \\
Untargeted Poisoning \cite{R5}      & CIFAR10    & -       & 46.72        & -        & 91.31         \\
\textbf{MT-FBA (Ours)}              & CIFAR10    & \textbf{88.83}  & \textbf{89.87}   & \textbf{90.91} & \textbf{85.87} \\
\bottomrule
\end{tabular}
\end{center}
\end{table}

\begin{table}[!t]
\begin{center}
\caption{ASR (\%) of MT-FBA in the presence of input defenses on the CIFAR10 dataset.}
\label{Tab_defense}
\footnotesize
\begin{tabular}{l c c c c c }
\toprule
                   & w/o  & Bit-Red \cite{R56} & FD \cite{R55}    & NRP \cite{R54}   & DP \cite{R57}\\
\midrule
Clean Accuracy     & 90.58      & 70.42       & 29.16     & 86.04     & 37.21 \\
MT-FBA (Ours)      & 88.83      & 59.22       & 38.78    & 72.85     & 48.26 \\
\bottomrule
\end{tabular}
\end{center}
\end{table}

Additionally, we evaluated the ASR of our proposed backdoor attack method in the context of potential defenses during the federated model inference phase on the CIFAR10 dataset. As shown in Table \ref{Tab_defense}, we incorporated four defense methods: Bit-Red \cite{R56}, FD \cite{R55}, NRP \cite{R54}, and differential privacy (DP) \cite{R57}. The first three methods are commonly used in adversarial example detection tasks to filter out noise from samples, while DP is employed during the training process to protect the privacy of federated models from attacks. Table \ref{Tab_defense} demonstrates that the inclusion of any defense method reduces the model's accuracy on clean samples. Without any defense, our method achieved an ASR of 88.83\%. Although the application of defenses reduced the ASR, it still posed a significant threat to the federated model. Since our backdoor attack method introduces class-specific noise to clean samples to mislead the federated model, it is worthwhile to consider combining adversarial example detection techniques with backdoor sample detection techniques in future research to develop more robust defense methods.

\begin{figure}[!t]
\centering
\includegraphics[width=0.35\linewidth]{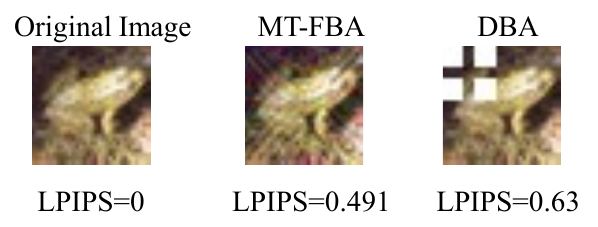}
\caption{Demonstration of a backdoor sample generated by our methods MT-FBA and contrasting method DBA on the CIFAR10 dataset. The LPIPS value represents the average of all backdoor images in the entire dataset.}
\label{FigLPIPS}
\end{figure}

It's important to note that the comparison among different backdoor attacks might be unfair due to varied methods of generating backdoor samples. Therefore, we introduce an indicator to judge concealment. Specifically, LPIPS \cite{R45} is a metric consistent with human perception that can be used to evaluate the similarity between backdoor samples and their corresponding original samples. The lower the LPIPS, the more consistent the generated image is with the original image. Backdoor samples are created on the CIFAR10 dataset employing both our method and the DBA method \cite{R7}. The comparative results are illustrated in Figure \ref{FigLPIPS}, where the numbers are the average LPIPS of all images in the entire CIFAR10 dataset. Our method yields backdoor samples with lower LPIPS values, indicating a higher degree of similarity to the original samples. This implies a potential increased difficulty in their detection. The subsequent Section \ref{Sec5.2} will elucidate the effectiveness of our proposed method through attention map visualization.

\section{Discussion}\label{Sec5}

\subsection{Zero-Shot Attak}\label{Sec5.1}
The related work \cite{R7, R9} discusses between multi-shot attacks and single-shot attacks. In a multi-shot attack, a compromised client can be repeatedly chosen by the server to participate in training across multiple rounds. Conversely, a single-shot attack involves the compromised client being selected only once, leaving the backdoor potentially weakened or forgotten by subsequent benign updates to the global model. We consider a zero-shot attack where the compromised client is either never selected or the backdoor trigger is not injected into the training set, preventing the global model from learning the backdoor trigger's features.

The experiment detailed in Section \ref{Sec4.2.1} revealed that even with a minimal proportion of poisoned clients (1\%), a relatively high backdoor attack success rate can still be achieved. This raises the question: Can our proposed method activate the model's backdoor during the inference phase even if the backdoor trigger is not injected into the training data of compromised clients? To investigate this, we conducted two independent experiments on the CIFAR10 dataset.
\begin{itemize}
  \item Train a ResNet-18 model without backdoor attacks (regular model).
  \item  Train a backdoor ResNet-18 model based on the settings outlined in Table \ref{Tab2} (multi-shot) and save the triggers.
\end{itemize}

Subsequently, inject the obtained triggers into the test sample to evaluate the ASR on both the regular model (zero-shot) and multi-shot model. The experimental results are shown in Table \ref{Tab5}.

\begin{table*}[htbp]
\begin{center}
\caption{Comparison of ASR (\%) and MA (\%) Between the Multi-shot Attack and the Zero-shot Attack on the CIFAR10 Dataset}
\label{Tab5}
\footnotesize
\begin{tabular}{c c c c c c c c c c c c c}
\toprule
           & MA   & 0     & 1     & 2     & 3     & 4     & 5     & 6     & 7     & 8     & 9     & Average ASR \\
\midrule
Multi-shot & 89.87 & 92.41 & 83.80 & 91.99 & 86.78 & 94.01 & 79.12 & 90.71 & 86.60 & 87.63 & 95.26 &  88.83   \\
Zero-shot  & 90.58 & 82.38 & 63.72 & 76.78 & 71.86 & 87.61 & 64.51 & 86.66 & 74.36 & 77.39 & 88.66 &  77.39   \\
\bottomrule
\end{tabular}
\end{center}
\end{table*}

It's remarkable to note that the average ASR of the zero-shot attack was only 11.44\% lower than that of the multi-shot attack, with the trigger created using the label ``9" achieving an ASR as high as 88.66\%. To the best of our knowledge, we are the first to propose this zero-shot federated backdoor attack. In this method, the adversary faces nearly zero risk of acquiring the model's backdoor. It only requires access to a portion of the compromised client's data and obtains the global model parameters nearing convergence at the 200th round. The success of the zero-shot attack lies in the fact that the backdoor trigger generated by our method is trained with a federated global model. Therefore, the backdoor trigger indirectly learns the overall data distribution of all clients. The introduction of Multi-Target Federated Backdoor Attack (MT-FBA) brings new insights into the defense against federated backdoors. It demonstrates that merely ensuring encryption during the communication phase is insufficient, as zero-shot attacks can directly perform backdoor attacks during the inference phase by exploiting the backdoor features embedded in the model. Hence, there is a pressing need to develop more effective detection methods for backdoor samples during the inference phase of federated models.

\subsection{Explanation via Feature Visualization}\label{Sec5.2}

Attention heatmaps serve as a valuable tool for interpreting neural networks, providing insight into the influence of target triggers by contrasting the focal points of the neural network on clean samples and backdoor samples. We use LayerCam \cite{R15} to investigate the focus of the ResNet-18 model on all backdoor triggers on the CIFAR10 dataset. The results are illustrated in Figure \ref{Fig8}.

\begin{figure*}[htbp]
\centering
\includegraphics[width=0.95\linewidth]{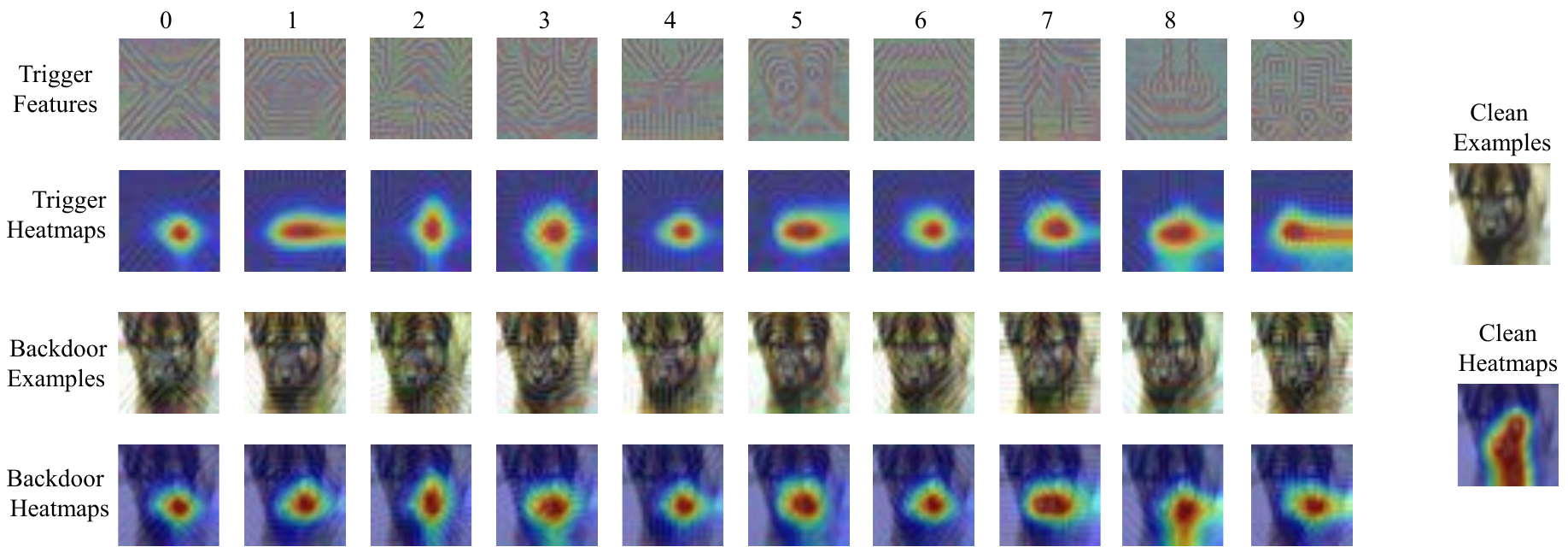}
\caption{Visualization of feature and attention heatmaps of backdoor triggers and corresponding backdoor samples on the CIFAR10 dataset. }
\label{Fig8}
\end{figure*}

Specially, the targets “0”-“9” correspond to “aircraft”, “automobile”, “bird”, “cat”, “deer”, “dog”, “frog”, “horse”, “ship”, and “truck”. Firstly, it's evident that different triggers possess distinct features. For example, trigger “8” looks like a ship, while trigger “4” mimics the corners of a deer, and so on. Additionally, each trigger exhibits inconsistent focal points. For example, triggers 1, 8, and 9 focus on lateral features and need to focus on more parts of the image, which may be related to the object's shape.

Moreover, we can observe remarkable similarities in the attention focus between the backdoor sample and the corresponding target trigger. While the original clean sample's heatmap highlights a broader area, the size, and shape of the attention area for each backdoor sample correlate with the respective trigger. For instance, the heatmaps of backdoor samples generated by triggers “0”, “4”, and “6” display a more concentrated focus, while the one from trigger “2” shows a vertical focus. This underscores how trigger features significantly alter the model's attention scope regarding the original sample feature. This variance also explains why our approach has successfully executed zero-shot attacks. This insight contributes significantly to understanding and enhancing defenses against federated backdoor attacks.

\subsection{Computational Complexity Analysis}\label{Sec5.3}

Our backdoor attack strategy involves three key steps in the training process. Step 1 is normal federated training, Step 2 is backdoor trigger training, and Step 3 mirrors the computational complexity of Step 1 as it only replaces clean samples with backdoor samples. Therefore, we focus on analyzing the computational complexity of Steps 1 and 2.

\subsubsection{Step 1: Train the Federated Model}
Assume the federated system consists of $K$ clients and one server. In each round of communication, the server selects $cK$ clients to participate in training $(0<c<1)$. Assume that there are $N$ training data samples in total, and give each client $\frac{N}{K}$ training samples according to independent and identical distribution (IID). Suppose all clients follow the same training strategy, using a batch size of $B$ and iterating for $E$ epochs. The number of gradient backpropagation steps for each client in one round of training is $\frac{EN}{KB}$. Therefore, in one round of communication, the total number of gradient backpropagation steps for all participating clients is $\frac{EN}{KB} \ast cK=\frac{cEN}{B}$. If the federated system performs a total of $R$ communication rounds, the total number of gradient backpropagation steps is $\frac{cREN}{B}$. Thus, the number of gradient backpropagations is proportional to the proportion $c$ of clients participating in federated training.

\subsubsection{Step 2: Train Backdoor Triggers}
Assume there are $P$ compromised clients among the $K$ clients. Each compromised client has the same number of training samples as clean clients, i.e., $\frac{N}{K}$. Assume each compromised client uses a batch size of $B'$, iterates for $E'$ epochs to train its backdoor trigger, and needs $R'$ rounds of communication to aggregate the features of the backdoor trigger. The total number of gradient backpropagation steps for all compromised clients to train the backdoor trigger is $\frac{PR'E'N}{KB'}$. Therefore, the number of gradient backpropagations is proportional to the ratio of compromised clients $\frac{P}{K}$.

The above analysis focuses solely on the client training process, which requires GPUs for gradient computation. The aggregation process on the server, which uses CPUs, is not included in this complexity analysis. Server aggregation involves summing and averaging the parameters of all client models sequentially, as described by Eq. (\ref{Eq1}). Similarly, the attacker uses an analogous operation to aggregate the features of the backdoor triggers from all compromised clients, as detailed in Eq. (\ref{Eq9}). The computational complexity of model aggregation is proportional to the size of model parameters $w$ and the number of clients $cK$ participating in aggregation.  On the other hand, the computational complexity of backdoor trigger aggregation is proportional to the size and categories of the backdoor trigger $\delta$, as well as the number of compromised clients $P$. However, since model or trigger aggregation is done on the server (which is, by default, a device with powerful computing power), this time increase is minimal. In addition, the number of parameters involved in model or trigger aggregation is very small compared to the amount of gradient calculations during training, so the aggregation time is negligible relative to the overall training time.  Ablation experiments on different model architectures and client numbers are conducted on the CIFAR10 dataset. The test system is configured with an NVIDIA 3090 GPU, an Intel(R) Xeon(R) Silver 4214 CPU @ 2.70 GHz, and Ubuntu 18.04.5 LTS.

\begin{table}[!t]
\begin{center}
\caption{ASR ($\%$), MA ($\%$), and Training Time ($h$) for Different Numbers of Clients on the CIFAR10 Dataset}
\label{Tab6}
\footnotesize
\begin{tabular}{c c c c}
\toprule
Client Number & MA ($\%$) & ASR ($\%$) & Time ($h$) \\
\midrule
100          & 89.87   & 88.83    & 10.15    \\
200          & 87.71   & 81.49    & 10.00    \\
300          & 86.36   & 81.68    & 10.61    \\
500          & 79.82   & 73.19    & 10.50    \\
800          & 78.34   & 78.18    & 10.62    \\
1000         & 70.19   & 71.97    & 10.64    \\
\bottomrule
\end{tabular}
\end{center}
\end{table}

\begin{table}[!t]
\begin{center}
\caption{ASR ($\%$), MA ($\%$), and Training Time ($h$) for Different Models on the CIFAR10 Dataset}
\label{Tab7}
\footnotesize
\begin{tabular}{c c c c c}
\toprule
Model   & MA ($\%$)   & ASR ($\%$)  & Parameters (M) & Time ($h$) \\
\midrule
ReaNet-18           & 89.87 & 88.83 & 11.17      & 11.5     \\
ResNet-50           & 89.19 & 90.35 & 23.52      & 23.0     \\
ResNet-101          & 89.50 & 85.99 & 42.51      & 28.0     \\
Inception-v3        & 79.14 & 60.82 & 21.81      & 25.0     \\
InceptionResNet-v2  & 88.57 & 63.93 & 54.32      & 44.5     \\
\bottomrule
\end{tabular}
\end{center}
\end{table}

First, as shown in Table \ref{Tab6}, we varied the number of clients to 100, 200, 300, 500, 800, and 1000 to verify the impact of the federated system's scale on computational complexity. Other parameters are shown in Table \ref{Tab1} and Table \ref{Tab2}.  Experiments show that as the number of clients increases, the main task accuracy (MA) of the model decreases, and the backdoor attack success rate (ASR) also decreases, but the computing time does not increase too much. It indicates that the factors affecting the system operation speed are mainly in the client training process rather than the server aggregation process. Second, as shown in Table \ref{Tab7}, we conducted an ablation analysis on different classification models, including ResNet-18, ResNet-50, ResNet-101, Inception-v3 \cite{R48}, and InceptionResNet-v2 \cite{R49}. The experimental metrics include the main task accuracy (MA), the backdoor attack success rate (ASR), the number of parameters, and training time. Even if different models have similar accuracy, they may exhibit different backdoor attack success rates due to varying levels of robustness inherent in different model architectures. In the field of adversarial examples research, there are many discussions on the robustness of different model architectures, such as \cite{R46, R47}. Additionally, as the number of model parameters increases, the training time may also increase, although this is not always directly proportional. The training speed of deep learning models is also influenced by factors beyond just the number of parameters, including the model architecture and inference characteristics such as FLOPs (Floating Point Operations), which are beyond the scope of this work.

\section{Conclusions}\label{Sec6}

The paper introduces a novel federated backdoor attack approach based on feature aggregation. Our approach addresses key limitations of patch-based federated attack methods by aligning trigger size with the image, eliminating the need for manual design of patch shape and position. Additionally, it incorporates feature interactions among triggers from multiple compromised clients during training. Leveraging the intra-class attack strategy, we demonstrate the simultaneous generation of backdoor triggers for all target classes. Experimental results show the robustness of our method against federated defense mechanisms and its capability to execute zero-shot attacks. Moreover, visualizing the distinctive characteristics of backdoor triggers offers insights into defense against backdoor attacks.

Presently, our method still requires the simultaneous utilization of the training data from the compromised client and the nearly converged global model to train backdoor triggers. This requires a high level of authorization for the compromised client. In future work, we aim to achieve a federated backdoor attack by only utilizing the training data of the compromised client to train triggers, without implanting the triggers into the training data in subsequent rounds. This represents a model-agnostic zero-shot attack.

\section{Acknowledgements}

This work was supported in part by the Fundamental Research Funds for the Central Universities (2232021A-10), National Natural Science Foundation of China (61903078), Shanghai Pujiang Program (22PJ1423400), Shanghai Sailing Program (no. 22YF1401300), Natural Science Foundation of Shanghai (21ZR1401700).

\appendix
\section{Experiments on the MNIST dataset}\label{Sec appendix_A}

\subsection{Ablation Study of Poison Ratio and Interval}

In our exploration of the poisoning ratio, we examined nine distinct values for $P/K$, ranging from 1\% to 50\%. As shown in Table \ref{Tab2}, we set the boundary of the backdoor trigger on the MNIST dataset to 64/255 to clearly show the relationship between each influencing factor and the success rate of the backdoor attack. As depicted in Figure \ref{Fig9_1}, when the poisoning rate is lower than 30\%, the attack success rate of half of the backdoor triggers is low. Notably, the backdoor trigger labeled as ``8" demonstrates the highest backdoor attack success rate. When considering the performance of the backdoor triggers associated with labels ``0", ``4", ``6", ``8", and ``9", the ASR stabilizes at around a 30\% poisoning rate, and the MA exhibits marginal decline. Figure \ref{Fig9_2} reveals that the poisoning begins around the 10th round to achieve the optimal ASR, with minimal impact on the MA.

\begin{figure*}[htpb]
\centering
\subfloat[Poison Ratio]{\includegraphics[width=0.19\linewidth]{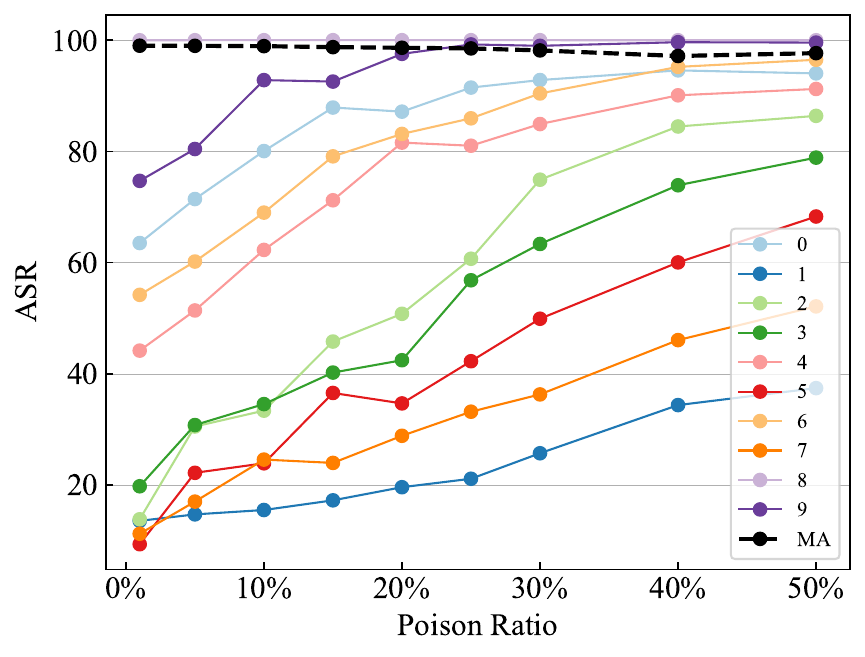}
\label{Fig9_1}}
\subfloat[Poison Point]{\includegraphics[width=0.19\linewidth]{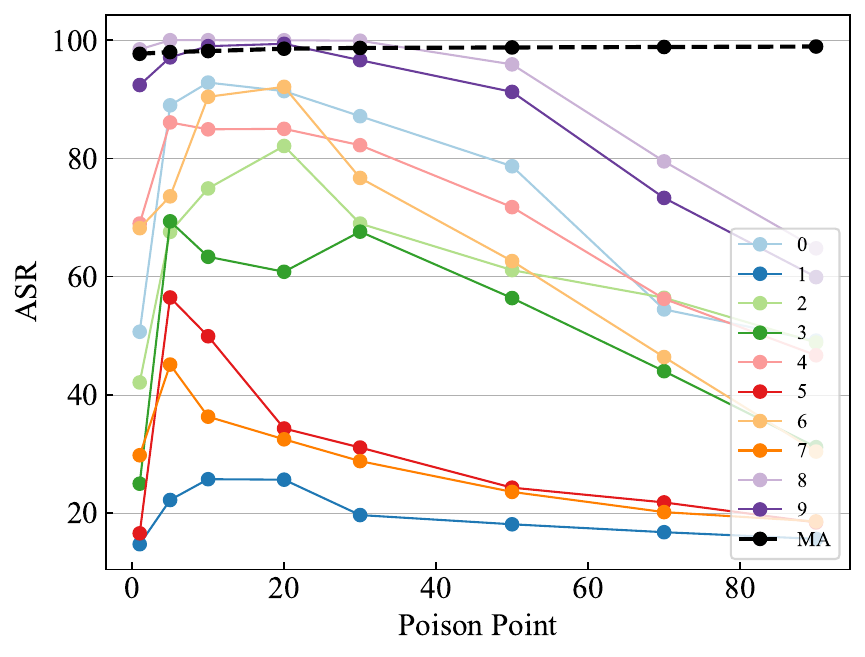}
\label{Fig9_2}}
\subfloat[Trigger Epochs]{\includegraphics[width=0.19\linewidth]{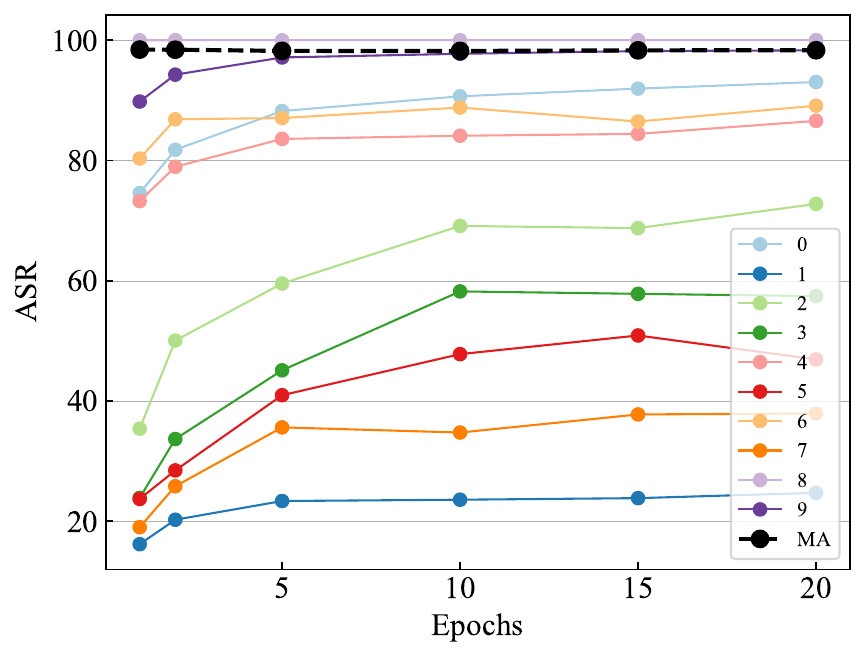}
\label{Fig9_3}}
\subfloat[Trigger Rounds]{\includegraphics[width=0.19\linewidth]{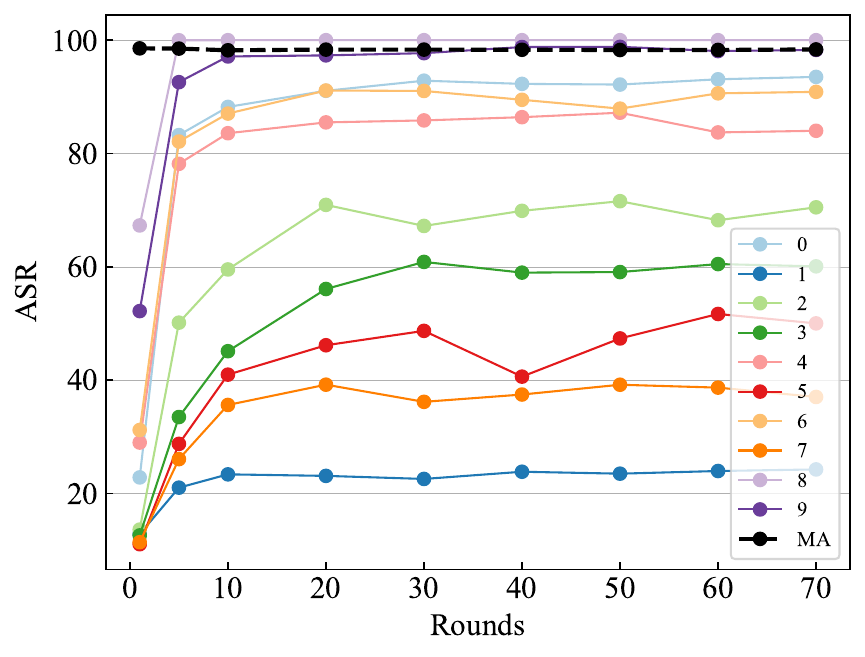}
\label{Fig9_4}}
\subfloat[Trigger Bound]{\includegraphics[width=0.19\linewidth]{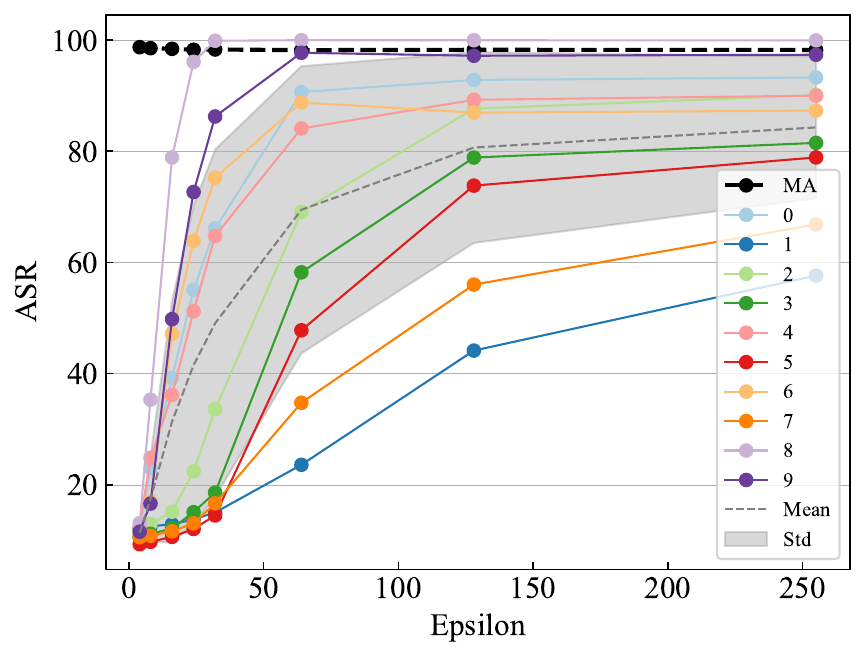}
\label{Fig9_5}}

\caption{ASR (\%) and MA (\%) for different impact factors during trigger training on the MNIST dataset. Numbers represent each class of backdoor trigger.}
\label{Fig9}
\end{figure*}

\subsection{Ablation Study of Trigger Epochs and Rounds}

Given the quick convergence of model training on the MNIST dataset, the training duration for the backdoor trigger is relatively short. As indicated in Figure \ref{Fig9_3} and Figure \ref{Fig9_4}, achieving the optimal backdoor attack success rate necessitates approximately 10 epochs and 20 rounds for training the backdoor triggers. The outcomes of the backdoor attack on the MNIST dataset align with the conclusions drawn from the experiments on the CIFAR10 dataset. Specifically, the number of epochs dedicated to training backdoor triggers on the compromised clients has a minor impact on the ASR. Conversely, the aggregation of features among compromised clients, denoted by the rounds, holds more significance. This again emphasizes the crucial role of feature aggregation of backdoor triggers among compromised clients in our proposed approach.

\subsection{Ablation Study of Trigger Bounds}

The ASR demonstrates a positive correlation with the pixel-bound value $\epsilon$ of the backdoor trigger, indicating that a larger $\epsilon$ leads to a higher ASR. As illustrated in Figure \ref{Fig9_5}, the shaded area represents the standard deviation of the ASR for each backdoor trigger. The standard deviation of ASR for each backdoor trigger on the MNIST dataset is larger, which is not obvious on the CIFAR10 dataset. The ASR achieved by the backdoor triggers associated with labels ``0", ``4", ``6", ``8", and ``9" on the MNIST dataset displays a more substantial variation with increasing $\epsilon$, whereas other backdoor triggers yield low ASR. We guess that this discrepancy arises due to the simplistic nature of features in the MNIST dataset, particularly in numbers like ``1" and ``7". An observable trend is that the backdoor triggers linked to numbers containing the circular feature, such as ``8", perform notably well. These numbers represent relatively complex features.

\begin{figure}[!t]
\centering
\includegraphics[width=0.4\linewidth]{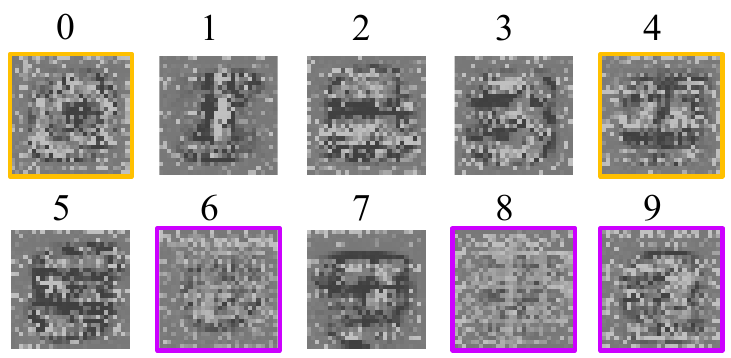}
\caption{Visualization of backdoor trigger features on the MNIST dataset.}
\label{Fig12}
\end{figure}

The triggers corresponding to each class are visually depicted in Figure \ref{Fig12}. We can observe that backdoor triggers labeled as ``6", ``8", and ``9" (outlined in purple) exhibit the most features or perturbations (depicted in white areas), while triggers ``0" and ``4" (outlined in orange) possess relatively fewer features, and ``1", ``5", and ``7" have the least features. When a backdoor trigger contains fewer distinctive features, embedding it into samples of other classes becomes challenging, as it struggles to alter the original features of samples significantly. Therefore, it is very difficult to make the classification model classify backdoor samples into the class consistent with backdoor triggering. This observation aligns with the findings from experiments conducted on the CIFAR10 dataset. As depicted in Figure \ref{Fig8}, the backdoor trigger labeled as ``9" has the most diverse features, consistently leading to the highest backdoor attack success rates in numerous CIFAR10 experiments (e.g., Figure \ref{Fig3}). However, the CIFAR10 dataset showcases relatively rich features within each class, alleviating the gap between classes. Conversely, in the MNIST dataset, the backgrounds are simpler, amplifying the distinctions between classes.

\section{Experiments on the Mini-ImageNet dataset}\label{Sec appendix_B}

\subsection{Ablation Study of Trigger Bounds}

\begin{figure}[htpb]
\centering
\includegraphics[width=0.35\linewidth]{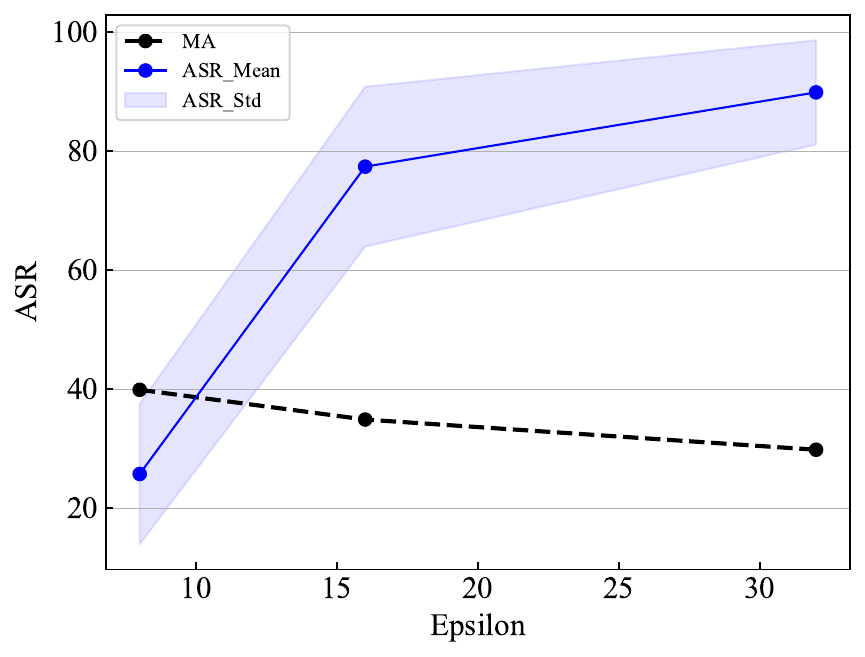}
\caption{ASR (\%) and MA (\%) for different trigger bounds during trigger training on the Mini-ImageNet dataset.}
\label{Fig13}
\end{figure}

Mini-ImageNet \cite{R60} is a subset extracted from the ImageNet dataset. It contains 100 classes with 600 images per class. For our experiments, we cropped the image size to 224*224*3. Because it was originally designed for few-shot learning tasks, we re-divided it into a training set and a test set with an 8:2 ratio. The relevant settings of the dataset are shown in Table \ref{Tab1}, and the backdoor attack design is detailed in Table \ref{Tab2}. To investigate the relationship between the backdoor trigger bound and the attack success rate (ASR), we set three bounds: 8, 16, and 32 (image pixel values range from 0 to 255). The experimental results are presented in Figure \ref{Fig13}. The black dotted line represents the accuracy on clean samples, the blue solid line shows the average backdoor ASR and the shaded area indicates the standard deviation of the ASR for each class of backdoor trigger (a total of 100 categories). We observe that as the backdoor trigger bound increases, the ASR also increases, while the variance in the ASR for each backdoor trigger remains relatively stable. Compared to the results on CIFAR10 (Figure \ref{Fig3_5}) and MNIST (Figure \ref{Fig9_5}), the experiment on the Mini-ImageNet dataset further confirms that ASR increases with the bound. Additionally, by comparing these figures, we find that for the MNIST dataset, which has simpler features, the variance in ASR across different backdoor triggers is larger. In contrast, for the CIFAR10 and Mini-ImageNet datasets, which have more complex features, the variance in ASR is smaller.



\bibliographystyle{elsarticle-num} 
\bibliography{MT-FBA}



\end{document}